\documentclass[11pt, a4paper]{article}

\textwidth=15.6cm \textheight=22.2cm
\oddsidemargin=.5cm\evensidemargin=.5cm\topmargin=.8cm\headsep=0cm

\usepackage{a4wide, amsfonts,enumerate,amsmath,amsfonts,xspace,array,delarray,theorem,amssymb,epsfig}
\usepackage{graphicx,color}

\theoremstyle{break}\theorembodyfont{\it}
\theoremheaderfont{\normalfont\bfseries}

\newtheorem{theo}{Theorem}

\newtheorem{lem}[theo]{Lemma}

\newtheorem{prop}[theo]{Proposition}

\newtheorem{coro}[theo]{Corollary}

\newtheorem{rem}[theo]{Remark}

\newenvironment{proof}{\noindent{\bf Proof: }}
                {\leavevmode\unskip\nobreak\hskip2em plus1fill
                $\scriptstyle\square$\vskip\theorempostskipamount\par}
% \newenvironment{rems}{\vskip\theorempostskipamount\par\noindent{\bf Remarks: }}
% {\leavevmode\unskip\nobreak\hskip2em plus1fill
% \vskip\theorempostskipamount\par}
% \newenvironment{rem}{\vskip\theorempostskipamount\par\noindent{\bf Remark: }}
% {\leavevmode\unskip\nobreak\hskip2em plus1fill
% \vskip\theorempostskipamount\par}

\def\exp{\mathrm {exp}}
\def\dis{\displaystyle}

\let\lt=<
\let\gt=>

\def\R{{\mathbb R}}

\let\phi=\varphi
\let\eps=\varepsilon

\def\be{\begin{equation}}
\def\ee{\end{equation}}
\def\beq{\begin{eqnarray*}}
\def\eeq{\end{eqnarray*}}
\def\e{{\rm e}}
\def\d{{\rm d}}
\newcommand{\fer}[1]{(\ref{#1})}
%%%%%%%%%%%%%%%%%%%%%%%%%%%%%%%%%%%%%%%%%%%%%%%%%%%%%%%%Toscani

\def\hat {\widehat }

%%%%%%%%%%%%%%%%%%%%%%%%%%%%%%%%%%%%%%%%%%%%%%%%%%%%%%%%%%%%%%%%%
\newcommand{\ve}{\varepsilon}
\newcommand{\va}{\varphi}

\newcommand{\RR}{\mathbb{R}}
\newcommand{\NN}{\mathbb{N}}

\newcommand{\bd}{\begin{displaymath}}
\newcommand{\ed}{\end{displaymath}}
\newcommand{\ba}{\begin{eqnarray}}
\newcommand{\ea}{\end{eqnarray}}

%%%%%%%%%%%%%%%%%%%%%%%%%%%%%%%%%%%%%%%%%%%%%%%%%%%%%

\title{On Rosenau-Type Approximations to \\  Fractional Diffusion Equations}

\author{G.Furioli \thanks{University of Bergamo, viale Marconi 5, 24044 Dalmine, ITALY.} \and A. Pulvirenti \thanks{Department of Mathematics,
University of Pavia, via Ferrata 1, 27100 Pavia, ITALY.} \and E. Terraneo \thanks{Department of Mathematics,
University of Milan, via Saldini 50, 20133 Milano, ITALY.} \and G. Toscani\thanks{Department of Mathematics,
University of Pavia, via Ferrata 1, 27100 Pavia, ITALY.} }

%%%%%%%%%%%%%%%%%%%%%% Document

\begin{document}

\hyphenation{bounda-ry rea-so-na-ble be-ha-vior pro-per-ties cha-rac-te-ris-tic}

\maketitle

%%%%%%%%%%%%%%%%%%%%%%%%%%%%%%%%%%%%%%%%%%%%%%%%%%%%%%%%%%%%%%%%%%%%%%%%%%%%%%%%%%%%%%%%%%%%%%%%%%%%%%%%%%%%%%%%%%%%%%%%%%%%%%%%%
\begin{abstract}
Owing to the Rosenau argument \cite{Ros}, originally proposed to obtain a regularized
version of the Chapman-Enskog expansion of hydrodynamics, we introduce a non-local
linear kinetic equation which  approxi\-ma\-tes a fractional diffusion equation. We
then show that the solution to this approximation, apart of a rapidly vanishing in time perturbation, approaches the fundamental solution
of the fractional diffusion (a L\'evy stable law)  at large times.
     \end{abstract}

%%%%%%%%%%%%%%%%%%%%%%%%%%%%%%%%%%%%%%%%%%%%%%%%%%%%%%%%%%%%%%%%%%%%%%%%%%%%%%%%%%%%%%%%%%%%%%%%%%%%%%%%%%%%%%%%%%%%%%%%%%%%%%%%%
\vskip 3mm

{\small \noindent {\bf 2000 AMS subject classification.} 35K55, 35K60, 35K65, 35B40.
\

\ \small \noindent {\bf Key words.}
Fractional diffusion equations,  non-local models, Fourier metrics,
Rosenau approximation,  L\'evy-type distributions.

\

%%%%%%%%%%%%%%%%%%%%%%%%%%%%%%%%%%%%%%%%%%%%%%%%%%%%%%%%%%%%%%%%%%%%%%%%%%%%%%%%%%%%%%%%%%%%%%%%%%%%%%%%%%%%%%%%%%%%%%%%%%%%%%%%%
\section{Introduction}
In \cite{Ros}, Rosenau proposed a regularized version of the Chapman-Enskog expansion of hydrodynamics, with a suitably modified viscosity term.
 This model is given by the scalar equation
            \be\label{Rose}
              \partial_t f(v,t) +  \partial_v \Psi(f) (v,t)=  \mathcal D_\ve f(v,t),\quad v\in\R,\  t\geq 0
            \ee
                        where $\ve\ll 1$ is a small positive parameter,
            \[
            \widehat{\mathcal D_\ve f}(\xi,t) =   \frac {- \ve\xi^2}{1 + \ve^2 m^2 \xi^2}\hat f(\xi, t),
            \]
 and           $\hat g(\xi)$ denotes the Fourier transform of $g(v)$
        \[
       {\cal F}g(\xi)= \hat g(\xi) =\int_\R g(v) \e^{-iv\xi}\, \d v,\quad  \xi\in\R.
        \]

The operator on the right hand side looks like the usual viscosity
term $\ve f_{vv}$ at low wave-numbers $\xi$, while for higher wave
numbers it is intended to model a bounded approximation of a
linearized collision operator, thereby avoiding the artificial
instabilities that occur when the Chapman-Enskog expansion for such
an operator is truncated after a finite number of terms.

Note that the right hand side of \fer{Rose} can be written in the Fourier variable as
            \[
          \frac {- \ve \xi^2}{1 + \ve^2m^2 \xi^2}\hat f(\xi,t) = \frac \ve{(\ve m)^2 } \left (\frac 1 {1 +\ve^2 m^2 \xi^2}\hat f(\xi,t) -\hat f(\xi,t) \right) = {\cal F} \left(\frac 1{m \bar\ve} \left (N_{\bar\ve}*f -f\right)\right )(\xi, t)
            \]
 where $\bar\ve = m\ve$,  $*$ denotes convolution and
            \be
              \label{Max}
              N_\gamma(v) = \frac 1{2\gamma}\, \e^{-|v|/\gamma}
            \ee
is a non-negative function satisfying $\| N_\gamma\|_{L^1} = 1$. In
other words, the approximation proposed by Rosenau consists in
substituting  the linear diffusion equation
           \[
             \partial_t g(v,t) =  \partial_{vv}^2 g(v,t)
            \]
with the linear non-local kinetic equation
            \be
              \label{kin-heat}
              \partial_t g (v,t) = \frac {1}{\ve^2} \left[ N_{\ve}*g(v,t) -g(v,t)\right]
            \ee
in which the background ``Maxwellian'' $N_\ve$  is given by \fer{Max} \cite{Cer, PT13}.

The study of the main properties of the Rosenau approximation
\fer{Rose} and of its relaxation part \fer{kin-heat} has attracted a
lot of interest \cite{LT, ST, RT}. Indeed, while it is clear that at
fixed time the solution to \fer{kin-heat} represents, for
sufficiently small values of the $\ve$ parameter, a good
approximation of the solution to the heat equation, the
characteristics of its solution for large times, and its possible
similarities with that of the heat equation are not evident. This
last problem has been recently addressed and studied in details in
\cite{RT}.

In view of the recent results in \cite{RT}, the argument proposed by Rosenau for the linear heat equation appears to be of high interest for further applications. Maybe the most natural extension of his idea is to apply a similar modification to other types of linear diffusion equations, like the fractional diffusion equations.
Fractional in space diffusion equations share in fact with the linear diffusion a simple representation in Fourier variables, which is at the basis of the introduction of a suitable Rosenau approximation.  However, while interesting for its possible applications, this type of approximation  has not been studied so far.

Fractional in space diffusion equations appear in many contexts.
Among others, the review paper by Klafter et al. \cite{KZS97} provides numerous references to physical phenomena in which these anomalous diffusion occurs (cf. \cite{Cha98, BWM00, GM98, SBMW01, MFL02, Vaz11} and the reference therein for various details on both mathematical and physical aspects).

As it can be argued from the original application, the Rosenau approximation establishes a clear connection between diffusion equations and non-local kinetic equations.
Possible connections between Boltzmann type equations and fractional in space diffusion equations have been studied only recently in \cite{MMM11, FPTT12}. While the analysis of  Mellet, Mischler and Mouhot in \cite{MMM11} is devoted to the study of  linear
kinetic equations of Boltzmann type, and their connection with
fractional diffusion equations, the results in \cite{FPTT12} refer to the nonlinear one dimensional Kac model for dissipative collisions introduced in \cite{PT}, and to its grazing collision limit.

The fractional diffusion equations read
\begin{equation}\label{frac-eq}
\partial_tg(v,t)=-(\sqrt{-\Delta})^\lambda g(v,t),
\end{equation}
where $0 < \lambda <2$. The fractional derivative operator
$(\sqrt{-\Delta})^\lambda$ is defined in the Fourier variable as
\be\label{def-frac} {\cal F}\left((\sqrt{-\Delta})^\lambda  h\right
)(\xi)= |\xi|^\lambda \hat h (\xi).
 \ee
Similarly to \fer{Rose}, the Rosenau-type correction consists in
substituting the fractional diffusion equation (in Fourier variable)
 \be\label{frac-fou}
 \partial_t\hat g(\xi,t)=-|\xi|^\lambda \hat g(\xi,t)
 \ee
with the equation
 \be\label{Ros2}
\partial_t\hat g(\xi,t)=  \frac {- |\xi|^\lambda}{1 + |\ve\xi|^\lambda}\hat g(\xi,t) = \frac 1{\ve^\lambda} \left[ \widehat M_\lambda(\ve\xi)\hat g(\xi, t) - \hat g(\xi,t)\right],
 \ee
where $\ve\ll 1$.
Note that in this case the Maxwellian function is expressed in Fourier variable by
  \be\label{Max-f}
  \widehat M_\lambda(\xi) = \frac 1{1+ |\xi|^\lambda}.
  \ee
  %%%%%%%%%%%%
  %%%%%%%%%%%% modificato
  %%%%%%%%%%%%
{It is notable that, for all $0<\lambda \leq 2$,  this Maxwellian
function is the characteristic function of a symmetric probability
distribution known in probability theory with the name of {\it
Linnik distribution} \cite{L, L2}.  In addition, when $\lambda >
1$, $\widehat M_\lambda \in L^1(\R)$, which allows us to apply the
inversion formula to conclude that $M_\lambda$ is a probability
density function. Consequently, in the case of a fractional
diffusion equation with $\lambda >1 $, Rosenau approximation
consists in substituting the fractional diffusion \fer{frac-eq}
with the non-local linear Boltzmann equation
 \be \label{kin}
 \partial_t g(v,t) = \frac {1}{\ve^\lambda} \left[ M_{\lambda,\ve}*g(v,t) -g(v,t)\right],
 \ee
where the Maxwellian $M_{\lambda,\ve}$ is defined through its
Fourier transform by the formula $\widehat
M_{\lambda,\varepsilon}(\xi)= \widehat M_{\lambda} (\eps \xi)$.
Unlikely we note that, on the contrary to what happens in the case
of the linear heat equation, where  the Maxwellian $N_\gamma$ is
explicitly given by \fer{Max}, in \fer{kin} the expression of the
Maxwellian \fer{Max-f} is no more explicit in the physical space.

This situation has evident analogies with the central limit
theorem of probability theory \cite{LR79}. Indeed, while the Rosenau
approximation of the heat equation is the analogue of the
classical central limit theorem, and the large-time behavior is
driven by a Gaussian density (the self-similar solution to the
heat equation), the approximation \fer{kin} is the analogue of
the central limit theorem for stable laws, where the expression of
the stable law is explicitly known only in the  Fourier variable.

Thanks to this analogy, the main features of the Maxwellian function
\fer{Max-f} can be extracted from classical results on the central
limit theorem for stable laws \cite{LR79, AA, KOA}. The distribution
function associated to the Maxwellian $M_\lambda$ belongs in fact to
the domain of normal attraction of the {\it L\'evy symmetric stable
distribution} of order $\lambda$, defined by
 \be\label{levy}
 \widehat L_\lambda(\xi) = \e^{-|\xi|^\lambda}.
 \ee
 We recall \cite{Fel,LR79} that  a distribution function $F$ belongs to the domain
of normal attraction of the stable law $L_\lambda(v) \d v$ if for
any sequence of independent and identically distributed
real-valued random variables $(X_n)_{n\geq 1}$ with common
distribution function $F$ there exists a sequence of real numbers
$(c_n)_{n\geq 1}$ such that the law of
\[
\frac{X_1+\dots+X_n }{n^{1/\lambda}}-c_n
\]
converges weakly to the stable law $L_\lambda(v) \d v$. We will
come back to this and other properties of the Linnik distributions
in Section \ref{general}  where we give a different
characterization of the domain of normal attraction in terms of
the asymptotic behavior at infinity. Without loss of generality, we will restrict our analysis to  centered distributions, so that $c_n=0$.

It is important to outline that the fractional diffusion
equation \fer{frac-eq} has a fundamental solution, given by the L\'evy distribution of order
$\lambda$. Indeed, the Fourier version \fer{frac-fou} is easily solved to give the solution
\be\label{fund}
\hat g(\xi,t) = \hat g_0(\xi)\e^{-|\xi|^\lambda t}.
\ee
In addition to the aforementioned difficulties related to the fact
that both Linnik distributions and the L\'evy distribution are not
explicitly  expressed in the physical space, a second main
difference with respect to the case of the heat equation is that in
this case the Maxwellian function has moments bounded only up to a
certain order. These facts mean that it is not immediate to assert
that the correction \fer{Ros2}, for a fixed time and a sufficiently
small $\ve$, is a \emph{good} approximation to the fractional
diffusion equation. Moreover, it is not clear whether or not the
large-time behavior of the solution to this approximation agrees
with the large-time behavior of the solution to the fractional diffusion.

The aim of this article is to give an answer to the previous
questions in the range $1< \lambda <2$ of fractional diffusion.
Like in the case of the heat equation, these results underline
that the Rosenau-type approximation can be viewed as a particular
case of a general approximation to the fractional diffusion
equation by means of a linear kinetic equation of type \fer{kin},
provided the \emph{background density} $M_\lambda$ is a probability
density function which lies in the domain of normal attraction of
the stable law \fer{levy}. In Theorem \ref{conveps}, it will be shown that,
in a certain metric equivalent to the weak*-convergence of
measures, the distance between the solution to the fractional
diffusion equation and the solution to the kinetic equation can be
bounded  in terms of $\ve$ and $t$.
While this bound allows us to obtain convergence for $\eps \to 0$, validating in this way
the approximating model,
the same bound is not enough to get convergence for large times, even in this
weak sense.
The solution of equation \eqref{kin}, which in the Fourier variable can be easily written as
\[
\hat g_\eps(\xi, t)=\e^{-t\eps^{-\lambda} (1-\widehat M_{\eps,\lambda}(\xi))} \hat g_0(\xi)
\]
can be expressed at least in a formal way, as a convolution in the physical space
\[
g_\eps(v,t)= P_{\lambda,\eps}(\cdot , t)\ast g_0(v).
\]
At difference with the solution to the fractional diffusion equation \fer{fund}, since $\widehat M_{\lambda, \eps}  \in C_0(\R)$  it follows that
$P_{\lambda, \eps}(v,t)$ cannot belong to $L^1(\R)$ at any positive time.
Indeed, as we will see in Section \ref{kinetic} resorting to a suitable expansion
\[
 P_{\lambda, \eps}(v, t) ={\rm e}^{- \ve^{-\lambda} t} \delta_0+{\rm
e}^{- \ve^{-\lambda} t}\sum_{n=1}^{\infty}\left(\frac t{\eps ^\lambda}\right)^n
\frac1{n!}M_{\lambda,\eps}^{*n}(v).
\]
one identifies in the above expression  a singular part $\displaystyle{{\rm e}^{- \ve^{-\lambda} t} \delta_0}$,
which vanishes exponentially fast both for $\eps \to 0$ and  $t\to +\infty$, and a regular  part belonging to $L^1(\R)$.
For the particular case when $M_{\lambda}$ is a Linnik distribution \eqref{Max-f}, we  will be able to
recover the long time asymptotic behavior in $L^1(\R)$ at the price of discarding this singular part. This will be done by introducing a
regularized approximated solution, in which the convolution kernel  $P_{\lambda, \eps}$ is replaced by
\[
P_{\lambda, \eps,reg}(v,t)=P_{\lambda,\eps}(v,t)+{\rm
e}^{-\ve^{-\lambda} t}\left(
M_{\lambda,\eps}(v) - \delta_0(v)\right),
\]
obtained by substituting the ${\rm e}^{- \ve^{-\lambda} t} \delta_0 \ast g_0$ term by ${\rm e}^{- \ve^{-\lambda} t}M_{\lambda, \eps} \ast g_0$.

The plan on the article is then as follows. In Section \ref{frac-diff} we list various properties of the fractional diffusion equation. In particular, we prove convergence in $L^1(\R)$
for large times to the fundamental solution \eqref{fund}.
 In Section \ref{kinetic},
using tools of the kinetic theory of rarefied gases, we
introduce a possible derivation and the main features of the
Rosenau approximation with a general Maxwellian belonging to the
domain of normal attraction of the stable law \fer{levy}. This
kinetic formulation is used to obtain explicit solutions to
the Rosenau equation \eqref{kin}, using both Fourier transform and
Wild sums \cite{PT13}. In Section \ref{general} we
deal with the problem of approximating the fractional
diffusion at any fixed finite time with the Rosenau-type equation
\fer{kin} with a general Maxwellian.
Last, in Section \ref{main-sect}, we investigate the large
time behavior of the solutions to \eqref{kin} with a Linnik
distribution as a Maxwellian. We first show that
with a suitable time-scaling, the approximated solution
 behaves asymptotically in time as the solution of the fractional diffusion equation
  in a suitable Fourier-based metric ({see next subsection}).
As a consequence, both scaled solutions converge in the same metric towards the asymptotic profile of the fundamental solution.
Then, we show that after a suitable regularization of the Rosenau equation, obtained by
discarding its singular part,  the approximated solution  behaves  asymptotically  in $L^1(\R)$ as the solution of the fractional diffusion equation.

{The case $\lambda=1$ is a special case, since $\widehat M_1(\xi)=
(1+|\xi|)^{-1}$ is not in $L^1(\R)$ and we did not succeed in performing
analogous calculations as in the cases $1<\lambda <2$. However, the same results can be obtained with minor
modifications provided the Linnik distribution  $\widehat M_1(\xi)$
is replaced in equation \eqref{kin} by the L\' evy distribution
$\widehat L_1(\xi)= \e^{-|\xi|}$ itself.}

These technical results possess an evident interest for a consistent  numerical approximation of the fractional diffusion equation \fer{frac-eq}. Indeed, let us start with a probability density function $g(v, t=0)= g_0(v)$. Then a semi-implicit Euler scheme applied to the kinetic equation \fer{kin} shows that, in a fixed small time-interval $\Delta t$, the solution can be approximated according to the rule
 \be\label{euler}
 g(v,t +\Delta t) = \frac{\ve}{\ve + \Delta t}g(v,t) + \frac{\Delta t}{\ve + \Delta t}M_{\lambda,\ve}*g(v,t).
 \ee
Therefore, at the time step $n+1$, the solution is obtained by a convex combination of the solution at the time step $n$, and of the convolution between the solution at the time step $n$ with the constant Maxwellian $M_{\lambda, \eps}$. In particular, expression \fer{euler} can be easily implemented by Monte Carlo methods \cite{PT13}.

\subsection{{Functional framework}}
      \label{subFuncFram}

Before entering  into the main topic of this paper, we list below the
various functional spaces, distances and norms used in our analysis. For $p \in [1,
+\infty)$ and $q \in [1, +\infty)$, we denote by $L_q^p$  the
weighted Lebesgue spaces
 \begin{equation*}
L^p_q(\RR) := \left\{ f : \mathbb{R} \rightarrow \mathbb{R} \text{
measurable; }\|f\|_{L^p_q}^p := \int_{\mathbb{R}} |f(v)|^p \, (1+
v^2)^{q/2} \, \d v < +\infty \right\}.
 \end{equation*}
 In particular, the usual Lebesgue spaces are given by \[ L^p := L^p_0.\]
Moreover, for $f \in L^1_q$, we can define for any $\alpha\leq q$
the $\alpha^{th}$ order \emph{moment} of $f$ as the quantity
 \begin{equation*}
   m_\alpha(f):= \int_\RR f(v) \, |v|^{\alpha} \d v \, < + \infty.
 \end{equation*}
For $s \in \NN$, we denote by $W^{s,p}$ the Sobolev spaces
 \begin{equation*}
W^{s,p}(\RR) := \left\{ f \in L^s; \|f\|_{W{s,p}}^p := \sum_{|k|
\leq s} \int_{\RR} \left |f^{(k)}(v)\right |^p \, \d v < + \infty
\right \}.
 \end{equation*}
If $p=2$ we set $H^s := W^{s,2}$.

Given a probability density $f$, we define its {Fourier
transform} $\mathcal F_v(f)$ by
 \begin{equation*}
\mathcal F_v(f)(\xi) = \widehat{f}(\xi) := \int_\RR f(v) e^{- i
\xi  v} \, \d v, \qquad \xi \in \RR
 \end{equation*}
 and the inverse Fourier transform as
 \[
 \varphi^{\vee}(v)= \frac 1{2\pi} \int_\R \varphi(\xi) \e^{i\xi v} \, \d \xi.
 \]
The Sobolev space $H^s$ can equivalently be defined for any $s
\geq 0$ by the norm
 \begin{equation*}
\|f\|_{H^{s}} := \left \| \mathcal F_v \left (f \, \right )\right
\|_{L^2_{2s}}.
 \end{equation*}
The homogeneous Sobolev space $\dot H^s$ is then defined by the
homogeneous norm
 \begin{equation*}
\|f\|_{\dot H^s}^2 := \int_\RR |\xi|^{2 s}| \widehat{f}(\xi)
 |^2 \, \d \xi.
 \end{equation*}

Finally, we introduce a family of Fourier-based metrics in the
following way: given $s >0$ and two probability densities $f$ and
$g$, their Fourier-based distance $d_s(f,g)$ is the quantity
 \begin{equation*}
d_s(f,g) := \sup_{\xi \neq  0} \frac{\left
|\widehat{f}(\xi)-\widehat{g}(\xi)\right |}{|\xi|^s}.
 \end{equation*}
This distance is finite, provided that $f$ and $g$ have finite moment of order  $s$ and
\[
\int v^k f(v) \d v= \int v^k g(v)\d v, \quad k=1,2, \dots, [s]
\]
where, if $s \notin \NN$,  $[s]$ denotes
the entire part of $s$ (or up to order $s-1$ if $s \in \NN$).
Moreover $d_s$ is an \emph{ideal} metric \cite{CT07}. Its main
properties are the following
\smallskip
 \begin{itemize}
\item[i)] For all probability densities $f$, $g$, $h$,
\[ d_s(f* h, g*h) \leq d_s (f, g); \]
\item[ii)]
Define for a given nonnegative constant $a$ the dilation \[f_a(v) =
 \frac 1a f\left ( \, \frac va \, \right ).\]
Then for any pair of probability densities $f$, $g$, and any
nonnegative constant $a$
 \be \label{scala}
  d_s( f_{a}, g_{ a}) = a^s \, d_s(f,g).
  \ee
\end{itemize}
The $d_s$-metric is related to other more known metrics of large use
in probability theory \cite{GTW95}. In parti\-cular, two classical
interpolation inequalities (see \cite{CGT} for proofs) will be used
in the following:
 \be
\begin{aligned}\label{ine!}
\left\| f\right\|_{L^1}&\leq C\left\|
f\right\|_{L^2}^{\frac{2\alpha}{1+2\alpha}}
m_{\alpha}(f)^{\frac{1}{1+2\alpha}},\qquad \alpha>0\\
\left\| f\right\|_{L^2}&\leq C\left( \sup_{\xi\neq 0}\frac{|\hat
f(\xi)|}{|\xi|^s}\right )^{\frac{2s}{1+4s}} \left\|f\right\|_{\dot
H^s}^{\frac{1+2s}{1+4s}},\qquad s>0.
\end{aligned}
\ee

%%%%%%%%%%%%%%%%%%%%%%%%%%%%%%%%%%%%%%%%%%%%%%%%%%%%%%%%
\section{The fractional diffusion equation}\label{frac-diff}
We recall here the existence result for the Cauchy
problem associated to the fractional diffusion equation of order
$\lambda$, $0<\lambda <2$, with initial data $g_0\in L^1(\R)$,
\begin{equation}\label{fraceq}
\left \{
\begin{aligned}
& \partial_tg(v,t)=-(\sqrt{-\Delta})^\lambda g(v,t), \ \ \ \ \ \
v\in \RR, \ t> 0\\
&g(v,0)= g_0(v)
\end{aligned}
\right .
\end{equation}
and we briefly list properties of the solution that are used in our analysis.

By considering equation \eqref{fraceq} in  the Fourier variable it is
straightforward to show that for any initial data $g_0 \in L^1$
there exists a  unique solution
 \be\label{sol1}
g(v,t)=\frac{1}{t^{1/\lambda}}L_\lambda\left(\frac{\cdot}{t^{1/\lambda}}\right)\ast
g_0(v),
  \ee
 where
$L_\lambda$
 is the L\'evy distribution of order $\lambda$ defined by
 $\hat L_{\lambda}(\xi)={\rm e^{-|\xi|^\lambda}}$. For the sake of
 simplicity we denote the fundamental solution of the fractional
 diffusion equation by
 \[
 P_\lambda(v,t)=\frac{1}{t^{1/\lambda}}L_\lambda\left(\frac v{t^{1/\lambda}}\right),
 \]
 and so
 \[
 g(v,t)=P_\lambda(\cdot,t)\ast g_0(v).
\]

 To outline the analogies
between the present problem and the classical central limit
theorem for stable laws \cite{LR79}, we will further assume in the
rest of the paper that $g_0$ is a probability density function. By
 mass conservation, the solution $g(t)$ will remain a
probability density for all subsequent times.

 It is well known \cite{LR79} that in the interval $0<\lambda<2$, $L_\lambda$ belongs to  $C_0(\R)$ and it is an even probability density.
 For $\lambda =1$ the L\'evy symmetric stable distribution coincides with the Cauchy distribution
 \[
 L_1(v)= \frac 1 \pi \frac 1{1+v^2}.
\]
 Even though for $\lambda \neq 1$ $L_\lambda(v)$ is not known explicitly in the physical variable, its behavior for large $v$  is known in details  \cite{Pol}. It holds
\[
L_\lambda(v) \sim {\frac 1 {\pi} \Gamma(1+\lambda) \sin \left(
\frac {\pi\lambda}2\right )}{|v|^{-(1+\lambda)}}, \quad |v|\to
+\infty.
\]
In consequence the heavy tailed density $L_\lambda$  has bounded moments only up to some order
 \be\label{moments}
\int_{\R} |v|^\alpha L_\lambda (v) \,  \d v =
m_\alpha(L_\lambda)<+\infty,\quad 0<\alpha <\lambda.
\ee
The case
$\lambda=2$ in equation \eqref{fraceq} corresponds to  the heat
equation
\[
\left\{
\begin{aligned}
&\partial_tg(v,t)=\Delta g(v,t)\ \ \ \ v\in \R, \ t>0\\
&g(v, 0)= g_0(v)\in L^1(\R).
\end{aligned}
\right.
\]
In this case it is well known that  the explicit solution is
\[
g(v,t)= \Omega(\cdot,t) \ast g_0(v)
\]
where $\displaystyle{\Omega(v,t) = \frac{1}{\sqrt{4\pi t}} \e^{-\frac{v^2}{4t}}}$
is the {\it heat kernel}.
It can be shown \cite{BB,To96} that $g(t)$ behaves as the heat kernel as $t\to +\infty$, provided that $g_0$ is a probability density
of finite energy and entropy, namely
\[
\int_\R v^2\, g_0(v) \d v<+\infty,\quad \int_\R | \ln g_0(v)| g_0(v)\, \d v<+\infty
\]
and more precisely the following bound is known to be sharp
\[
\| g(t)- \Omega(t)\|_{L^1} \leq \frac C{\sqrt{1+2t}}, \quad t > 0
\]
where $C$ is an explicit constant.

A similar behavior occurs for  the solution of the fractional
diffusion equation, but in this case it appears difficult to obtain an explicit rate of
approximation. However one can state the following
proposition.
\begin{prop}
\label{prop1} Let $g(t)$ be the solution of the Cauchy problem \eqref{fraceq},
corresponding to the initial value $g_0$,  a probability density function. Then
\[
\lim_{t\to +\infty}\left\| g(t) -
\frac{1}{t^{1/\lambda}}L_\lambda\left(\frac{\cdot}{t^{1/\lambda}}\right)\right\|_{L^1}
=0.
\]
\end{prop}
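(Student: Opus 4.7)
\noindent\textbf{Proof plan for Proposition \ref{prop1}.}
The plan is to exploit the convolution representation \eqref{sol1} together with the self-similar scaling of $P_\lambda$ and reduce the statement to the $L^1$-continuity of translations applied to $L_\lambda$. Since $g_0$ is a probability density, $\int_\R g_0(w)\,\d w = 1$, so we can write
\[
g(v,t) - P_\lambda(v,t) \;=\; \int_\R \bigl[P_\lambda(v-w,t) - P_\lambda(v,t)\bigr]\,g_0(w)\,\d w.
\]
Taking the $L^1(\R)$ norm in $v$ and applying Minkowski's integral inequality, we obtain
\[
\bigl\|g(t)-P_\lambda(\cdot,t)\bigr\|_{L^1} \;\leq\; \int_\R \bigl\|P_\lambda(\cdot-w,t)-P_\lambda(\cdot,t)\bigr\|_{L^1}\,g_0(w)\,\d w.
\]

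Next I would exploit the self-similar structure $P_\lambda(v,t) = t^{-1/\lambda} L_\lambda(v/t^{1/\lambda})$: the change of variable $v = t^{1/\lambda} u$ gives, for each fixed $w$,
\[
\bigl\|P_\lambda(\cdot-w,t)-P_\lambda(\cdot,t)\bigr\|_{L^1} \;=\; \bigl\|L_\lambda(\cdot - w/t^{1/\lambda})-L_\lambda\bigr\|_{L^1}.
\]
Since $L_\lambda \in L^1(\R)$ (it is a probability density), the $L^1$-continuity of translations (a standard density argument with continuous compactly supported functions) implies that for every fixed $w \in \R$,
\[
\bigl\|L_\lambda(\cdot - w/t^{1/\lambda})-L_\lambda\bigr\|_{L^1} \;\longrightarrow\; 0 \quad \text{as } t \to +\infty,
\]
because $w/t^{1/\lambda} \to 0$.

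Finally I would conclude by dominated convergence applied to the integral over $w$. Indeed, the integrand is bounded uniformly in $t$ by $2\,g_0(w)$, which is integrable, and it tends to $0$ pointwise in $w$. Hence the right-hand side of the Minkowski inequality vanishes as $t \to +\infty$, giving the desired conclusion. No step here is a serious obstacle; the only point that demands a brief justification is the $L^1$-continuity of translation for $L_\lambda$, which follows from $L_\lambda \in L^1(\R) \cap C_0(\R)$ via approximation by continuous compactly supported functions. Note that this approach yields convergence without producing an explicit rate, in agreement with the remark preceding the statement.
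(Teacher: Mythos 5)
Your proof is correct, and it takes a genuinely different and more elementary route than the paper. The paper passes to the associated Fokker--Planck equation \eqref{fp} via the self-similar change of variables, invokes Proposition 3 of \cite{FPTT12} to get $L^1$ convergence of the rescaled solution to ${\cal F}^{-1}(\exp(-|\xi|^\lambda/2))$, and then undoes the scaling and compares the kernels $\e^{-t|\xi|^\lambda}$ and $\exp(-(2t+1)|\xi|^\lambda/2)$. You instead exploit only that $g(t)=P_\lambda(\cdot,t)\ast g_0$ with $\int g_0=1$, so that Minkowski's integral inequality, the exact scaling identity $\|P_\lambda(\cdot-w,t)-P_\lambda(\cdot,t)\|_{L^1}=\|L_\lambda(\cdot-w/t^{1/\lambda})-L_\lambda\|_{L^1}$, the $L^1$-continuity of translations, and dominated convergence (with the uniform bound $2g_0(w)$) finish the argument. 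Every step checks out; the measurability needed for Minkowski is harmless since $w\mapsto\|L_\lambda(\cdot-w/t^{1/\lambda})-L_\lambda\|_{L^1}$ is continuous. What your approach buys is a short, self-contained proof that uses nothing about $L_\lambda$ beyond its being an $L^1$ probability density, and hence applies verbatim to any spreading family of probability kernels; what the paper's approach buys is the connection to the Fokker--Planck reformulation, which is the natural framework if one hopes to eventually extract a convergence rate under extra moment or entropy assumptions (as is done for the heat equation), and it reuses machinery already established in \cite{FPTT12}. For the statement as given, which asserts convergence without rate, your argument is complete.
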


\begin{proof} As in the case of the heat equation,  this convergence
can be  obtained by passing from the fractional diffusion equation
\eqref{fraceq} to the corresponding Fokker--Planck equation
\be\label{fp}
\partial_t u(v,t)= -( \sqrt {-\Delta} )^\lambda u(v,t) +\frac 2 \lambda \partial_v ( v u(v,t) )
\ee
through the change of unknown function
\be
u(v,t)= \e^{\frac 2\lambda t} g\left(\e^{\frac 2 \lambda t} v, \frac {\e^{2t}-1}2\right)
\ee
or, in an equivalent way,
\be
g(w, \tau)= \frac1{\left(\sqrt{2\tau+1}\right )^{2 /\lambda}} u\left( \frac w{\left(\sqrt{2\tau+1}\right )^{2 /\lambda}} , \frac {\ln(2\tau +1)}2\right ).
\ee
If we write equations \eqref{fraceq} and \eqref{fp} in the Fourier variable
\[
\begin{aligned}
&\partial_t \hat g(\xi,t)= -|\xi|^\lambda \hat g(\xi, t)\\
&\partial_t \hat u(\xi, t)= -|\xi|^\lambda \hat u(\xi, t) -\frac 2 \lambda\xi \partial_\xi \hat u(\xi, t)
\end{aligned}
\]
and the explicit expression of the Fourier transform of the solution of the fractional diffusion equation with $g_0$ as initial data  as ${\dis \hat g(\xi,t)= \e^{-|\xi|^\lambda t} \hat g_0(\xi)}$, we get
\[
\hat u(\xi, t)= \exp \left({-|\xi|^\lambda \frac {1-\e^{-2t}}2} \right )\hat g_0\left( \e^{-{\frac 2 \lambda} t }\xi\right ).
\]
In \cite{FPTT12}, Proposition 3, it was proved that for a given probability density
$g_0$ we have
\[
\lim_{t\to +\infty} \left\| u(t) - {\cal F}^{-1} \left( \exp \left(-\frac {|\xi|^\lambda}2\right )\right ) \right \|_{L^1}=0.
\]
This implies
\[
\lim_{\tau\to +\infty} \left\| u\left(\cdot, \frac {\ln(2\tau+1)}2\right ) - {\cal F}^{-1} \left( \exp \left(-\frac {|\xi|^\lambda}2\right )\right ) \right \|_{L^1}=0.
\]
By the scaling invariance of the $L^1$ norm we get
\[
\lim_{\tau\to +\infty} \left\| \frac 1{\left(\sqrt{2\tau+1}\right )^{{2 /\lambda}}}
u\left(\frac{\cdot}{\left(\sqrt{2\tau+1}\right )^{{2 /\lambda}}} , \frac
{\ln(2\tau+1)}2\right ) - {\cal F}^{-1} \left( \exp\left(-\frac
{\left|\left(\sqrt{2\tau+1}\right )^{{2 /\lambda}} \xi\right |^\lambda}2\right
)\right ) \right \|_{L^1}=0.
\]
Therefore
\[
\lim_{t\to +\infty} \left\| g(t) - {\cal F}^{-1} \left( \exp\left(-\frac {(2t+1)}2
|\xi|^\lambda\right )\right ) \right \|_{L^1}=0.
\]
In order to get the desired result, it is enough to verify that
\[
\lim_{t\to +\infty}\left\| g(t) -
\frac{1}{t^{1/\lambda}}L_\lambda\left(\frac{v}{t^{1/\lambda}}\right)\right\|_{L^1} =
\lim_{t\to +\infty} \left\| g(t) - {\cal F}^{-1} \left( \e^{-t {|\xi|^\lambda}}\right
) \right \|_{L^1} =0.
\]
This follows since
\[
\lim_{t\to +\infty} \left\| {\cal F}^{-1} \left( \e^{-t {|\xi|^\lambda}}\right )   -  {\cal F}^{-1} \left( \exp\left(-\frac {(2t+1)}2
|\xi|^\lambda\right)\right )  \right \|_{L^1}=0.
\]
Indeed
\[
\begin{aligned}
&{\cal F}^{-1} \left( \e^{-t {|\xi|^\lambda}}\right ) = \frac
1{t^{ 1/ \lambda}} L_\lambda \left( \frac v{t^{ 1/{\lambda}}}
\right)\\
& {\cal F}^{-1} \left( \exp\left(-\frac {(2t+1)}2
|\xi|^\lambda\right)\right ) = \frac 1{\left(t+{1 /2}\right)^{
1/ \lambda}} L_\lambda \left(\frac v{\left(t+{1 /2}\right )^{
1/{\lambda}}} \right)
\end{aligned}
\]
with $L_\lambda$ the L\'evy distribution, with  $\displaystyle{L_\lambda (v) \leq
\frac {C_\lambda}{1+|v|^{\lambda +1}}}$ for all  $v\in\R$.
\end{proof}

Another important argument concerned with the solution of the fractional diffusion
equation is the evolution of moments which are initially bounded. We already noticed
in \eqref{moments} that the moments $m_\alpha (L_{ \lambda})$ of $L_{\lambda}$ are
bounded for $0<\alpha <\lambda$. Let us consider the Cauchy problem \eqref{fraceq}
with initial data $g_0$,  a probability density with bounded  moments in the range
$0<\alpha<\lambda$,
\[
\int_\R |v|^\alpha g_0(v)\, \d v =
m_{\alpha} (g_0)<+\infty.
 \]
 Since
\[
g(v,t) = P_\lambda(\cdot,t)\ast g_0(v)=\frac 1{t^{1
/\lambda}}L_\lambda \left(\frac \cdot {t^{ 1/ \lambda}}\right )
\ast g_0(v)
\]
and
\be\label{mom-plambda}
\int_\R |v|^\alpha   P_\lambda(v, t) \d v=   \int_\R |v|^\alpha  \frac 1{t^{1/ \lambda}}L_\lambda \left(\frac v {t^{1 /\lambda}}\right )
 \d v = t^{{\alpha /\lambda}} m_\alpha\left( L_{\lambda}\right)
\ee
one obtains
\[
\begin{aligned}
&\int_\R |v|^\alpha P_\lambda(\cdot, t)\ast g_0 (v) \d v = \int_\R\int_\R |v|^\alpha  P_\lambda (v-w,t)g_0(w) \d w \d v\\
&= \int_\R\int_\R |v-w+w|^\alpha  P_\lambda (v-w,t)g_0(w) \d w \d v.
\end{aligned}
\]
Let $0<\alpha \leq 1$. Then we get
 \be\label{mom-alphapiccolo}
\begin{aligned}
& \int_\R\int_\R |v-w+w|^\alpha  P_\lambda (v-w,t)g_0(w) \d w \d v \leq \int_\R\int_\R \left(|v-w|^\alpha+|w|^\alpha\right )  P_\lambda (v-w,t)g_0(w)
 \d w \d v\\
&= t^{{\alpha /\lambda}} m_\alpha\left( L_{ \lambda}\right)+
m_{\alpha}(g_0).
\end{aligned}
\ee
 If $1<\alpha <\lambda$
\be
\begin{aligned}\label{mom-alphagrande}
& \int_\R\int_\R |v-w+w|^\alpha  P_\lambda (v-w,t)g_0(w) \d w \d v \leq 2^{\alpha -1} \int_\R\int_\R \left(|v-w|^\alpha+|w|^\alpha\right )  P_\lambda (v-w,t)g_0(w) \d w \d v\\
&= 2^{\alpha-1} \left(t^{{\alpha /\lambda}} m_\alpha\left( L_{
\lambda}\right)+ m_{\alpha}(g_0)\right ).
\end{aligned}
 \ee
 In both cases, the moments of the solution are uniformly bounded above by
 an explicit function of  time which grows as $t^{\lambda/\alpha}$.

We end this Section by proving that, in complete analogy with the heat equation,  any
convex functional is non-increasing along the solution to the fractional diffusion
equation. First of all, we remark that for $t_2>t_1>0$
 \[
\widehat P_\lambda (\xi,t_2) = \widehat L_\lambda(\xi t_2^{ 1/\lambda})=
\e^{-|\xi|^\lambda t_2} = \e^{-|\xi|^\lambda (t_2-t_1)} \e^{-|\xi|^\lambda t_1}  =
\widehat P_\lambda (\xi,t_2-t_1) \widehat P_\lambda (\xi,t_1).
\]
Owing to  expression   \fer{sol1} for the solution we obtain, for $t_2>t_1>0$,
\[
g(v,t_2)=P_\lambda (\cdot ,t_2)\ast g_0(v)=P_\lambda (\cdot ,t_2-t_1)\ast g(\cdot ,t_1)(v).
\]
Now, let $\phi(r)$, $ r \ge 0$ be a (smooth) convex function of $r$ and consider the
functional
\[
 \Phi(g)(t)= \int_\R \phi (g(v,t))\, \d v.
 \]
If $t_2>t_1>0$, we get
\[
\begin{aligned}
\Phi(g)(t_2)=
 \int_{\R} \phi(g(v, t_2)) \, \d v = \int_\R \phi \left(\int_\R P_\lambda (w,t_2-t_1)g(v-w, t_1) \d w\right ) \d v.
\end{aligned}
 \]
 Now, use the fact that $P_{\lambda}$ is a probability density, so
that  by Jensen's inequality
 \[
 \begin{aligned}
\int_{\R}
\phi\left(\int_{\R}g(v-w, t_1)P_{\lambda}(w,t_2-t_1)\, \d w \right)\,\d v  &\le
\iint_{\R^2} \phi(g(v-w, t_1)) P_{\lambda}(w, t_2-t_1)\, \d v \,\d w\\
& = \int_{\R} \phi(g(v,t_1))\, \d v= \Phi(g)(t_1).
\end{aligned}
 \]
Hence $\Phi(g)$ is non-increasing.

%%%%%%%%%%%%%%%%%%%%%%%%%%%%%%%%%%%%%%%%%%%%%%%%%%%%%%%%%%%%%%%%%%%%%

\section{The linear kinetic equation}\label{kinetic}

In this Section, we will briefly discuss both the derivation and the main properties
of the linear kinetic equation \fer{kin}. Let us consider a system composed of many
identical particles. Let the number of particles with velocity $v$ at time $t$ be
described by the process $X(t)$  with probability density $g(t)$, and suppose that the
variation of $X(t)$ is solely due to interaction with an external background. The
background $B_\lambda$ is here described by a random variable with probability density
$M_\lambda$, which we will assume in the domain of attraction of a L\'evy distribution
of order $\lambda$, given by \fer{levy}. Let us further assume that the interaction
process of a particle with velocity $v$ with a background particle with velocity $w$
generates a post-interaction velocity $v^*$ given by
\begin{equation} \label{collProc}
v^* = v + w .
\end{equation}
In terms of the process $X(t)$ the law of change given by
\fer{collProc} can be rewritten as \index{Rosenau approximation random process}
 \[
 X^*(t) = X(t) + B_\lambda,
 \]
which implies, in the case in which $B_\lambda$ and $X(t)$ are independent,
that the law of $X^*(t)$ is the convolution of the laws of $B_\lambda$ and
$X(t)$. Assuming that $X(t), B_\lambda$ are independent each other, for a
given observable quantity $\va(\cdot)$, we then have that the mean
value of $\va(X)$ satisfies

 \be\label{kine-w}
\frac \d{\d t}\int_{\R} \phi(v)g(v,t) \, \d v = \frac \d{\d
t}\langle \va(X(t))\rangle = \sigma \iint_{\R \times \R}\left(
\va(v^*) - \va(v)\right)\,g(v,t) M_\lambda(w) \, \d v\, \d w ,
 \ee
where the constant $\sigma> 0$ denotes as usual the interaction frequency. Note that
choosing $\varphi(v) = 1$ one shows that, independently of the background
distribution, $g(t)$ remains a probability density if it so initially
\[
\int_{\RR} g(v,  t)\,\d v = \int_{\RR} g_0(v)\,\d v = 1 .
\]
This is in general the unique conservation law associated to
equation \fer{kine-w}.

The effects of the background can be easily  modulated by considering, for a given
small positive parameter $\ve$, the random variable $\ve B_\lambda$.  To emphasize
this dependence, we will denote its distribution as
\[
M_{\lambda,\ve}(w)= \ve^{-1} M_\lambda (\ve^{-1} w).
\]
Then, by inserting $M_{\lambda,\ve}(w)$ into \fer{kine-w}, and setting the interaction
frequency $\sigma= 1/\ve^\lambda$, the kinetic equation \fer{kine-w} coincides with
the Rosenau approximation \fer{kin} in weak form. Hence, the Rosenau approximation of
a fractional diffusion equation of order $\lambda$ describes a system of particles
which modify their distribution through interactions with a background distributed
according to a probability law in the domain of attraction of a L\'evy stable law of
order $\lambda$.

%%%%%%%%%%%%%%%%%%%%%%%%%%%%%%%%%%%%%%%%
\subsection{Representations of the solution to the Rosenau approximation}

The  Rosenau approximated equation
\begin{equation}\label{approx}
 \partial_t g_\varepsilon(v,t)
 =\frac{1}{\varepsilon^\lambda}\left[M_{\lambda, \varepsilon} \ast g_\eps (v, t)  -g_\eps(v,t)\right ], \quad \eps \ll 1
\end{equation}
 where $\widehat M_{\lambda,\varepsilon}(\xi)= \widehat M_{\lambda} (\eps \xi)$ and
 $ M_{\lambda}$ belongs to the domain of normal attraction of the
 stable law (\ref{levy}),
 is a linear non local  kinetic equation of Boltzmann type. Existence results for this
 equation are well-established.
To find a solution of the Cauchy problem with $g_0\in L^1$ as initial data, we can
resort to  two equivalent methods. Resorting to the equation in the Fourier variable
one can get  a first explicit representation of the solution. This solution can be
expressed as
 \be\label{p_epslambda-spazio}
 g_\eps(v,t) = P_{\lambda, \eps}(\cdot , t) \ast g_0(v),
 \ee
 where, in the Fourier variable
 \be\label{p_epslambda-fur}
 \widehat P_{\lambda, \eps}(\xi, t)=
{\rm e}^ {-\varepsilon^{-\lambda} t\left(1- \widehat M_{\lambda,\eps} (\xi)\right)}.
 \ee
We underline that, since $M_\lambda\in L^1(\R)$ } for every $t
> 0$,  $\widehat P_{\lambda, \eps}(\xi, t) \notin C_0(\R)$ and consequently
$P_{\lambda, \eps}(v, t) \notin L^1(\R)$. This (unpleasant) feature of $P_{\lambda,
\eps}(\cdot, t)$ will appear in a clearer way by applying the so-called Wild sum
expansion.

This expansion allows a useful  representation of the solution of equation
\eqref{approx}. It has been first introduced by Wild to construct a solution to the
Boltzmann equation for Maxwell molecules \cite {Wil51}, and it appears well adapted to
both linear and nonlinear kinetic equations \cite{PT13}. Let $h_\varepsilon(v,t)={\rm
e}^{ \varepsilon^{-\lambda}t }g_\varepsilon(v,t)$. Then the Cauchy problem associated
to equation \eqref{approx} can be written as a fixed point problem as follows. Since

\[ \left\{
\begin{aligned}
&\left(\partial_tg_\varepsilon(v,t)+\frac{1}{\varepsilon^\lambda}g_\varepsilon(v,t)\right){\rm
e}^{\ve^{-\lambda} t}=\frac{{\rm e}^{\ve^{-\lambda}
t}}{\varepsilon^\lambda}M_{\lambda,\eps}\ast
g_\eps(\cdot,t)(v)\\
&g_\varepsilon(v,0)=g_0(v),
\end{aligned}\right.
\]
 then
$$
g_\eps(v,t){\rm e}^{\ve^{-\lambda} t}-g_\eps(v,0)=\frac 1{\eps^\lambda}\int_0^t{\rm
e}^{\ve^{-\lambda}s} M_{\lambda,\eps}\ast g_\eps(v,s)\, \d s.
$$
Therefore
$$
h_\eps(v,t)=g_0(v)+\frac1{\eps^\lambda}\int_0^t
M_{\lambda,\eps}\ast h_\eps(v,s)\, \d s=\Phi_\eps(h_\eps)(v,t).
$$
Starting from
$$
h_\eps^{(0)}(v)=g_0(v)$$
 and defining for any $n\geq 0$
 $$
h_{\eps}^{(n+1)}(v,t)=\Phi_\eps(h_\eps^{(n)})(v,t)
$$
 we construct the monotone sequence
$$
h_\eps^{(n)}(v,t)=h_\eps^{(n-1)}(v,t)+\left({t\eps^{-\lambda}}\right)^n\frac{1}{n!}
\ M^{*n}_{\lambda,\eps}\ast g_0(v),
$$
which converges in $L^1(\R)$ towards
$$
h_\eps(v,t)=g_0(v)+\sum_{n=1}^{\infty}\left(\frac t{\eps ^\lambda}\right)^n
\frac1{n!}M_{\lambda,\eps}^{*n}\ast g_0(v).
$$
Finally, we obtain the expression
 \be\label{wi2}
g_\eps(v,t)={\rm e}^{- \ve^{-\lambda} t} g_0(v)+{\rm e}^{-
\ve^{-\lambda} t}\sum_{n=1}^{\infty}\left(\frac t{\eps
^\lambda}\right)^n \frac1{n!}M_{\lambda,\eps}^{*n}\ast g_0(v).
 \ee
By comparing \fer{wi2} with \eqref{p_epslambda-spazio} one obtains an explicit
representation of the fundamental solution $P_{\lambda, \eps}(\cdot, t)$
 \be\label{fu2}
 P_{\lambda, \eps}(v, t) ={\rm e}^{- \ve^{-\lambda} t} \delta_0(v)+{\rm
e}^{- \ve^{-\lambda} t}\sum_{n=1}^{\infty}\left(\frac t{\eps ^\lambda}\right)^n
\frac1{n!}M_{\lambda,\eps}^{*n}(v).
 \ee
At difference with the fundamental solution of the original fractional diffusion
equation \fer{frac-eq}, expression \fer{fu2} shows that $P_{\lambda, \eps}(\cdot, t)$
contains a singular part, the Dirac delta function $\delta_0$ located in $v=0$, of
size exponentially decaying with time and $\eps$.

%%%%%%%%%%%%%%%%%%%%%%%%%%%%%%%%%%%%%%
\subsection{Properties of the solution to the Rosenau approximation}
 Equation \fer{kine-w} allows us to control
the time evolution  of the moments of $g(t)$. For a given constant $\alpha >0$, let us
take $\varphi(v) = |v|^\alpha$. We obtain
\[
   \frac{\d}{\d t} \int_{\R} |v|^\alpha g(v, t) \, \d v =  \,
\frac 1{\ve^\lambda} \iint_{\RR^2} \left[ |v^*|^\alpha - |v|^\alpha
\right] g(v, t)M_{\lambda,\ve}(w) \, \d v \, \d w.
\]
Since $|v^*|^\alpha = | v+ w|^\alpha \le c_\alpha (|v|^\alpha
+|w|^\alpha)$, the moment of $g(t)$ of order $\alpha$ is bounded if the corresponding moment of the background distribution is
bounded.
%Moreover, it can be easily shown that, provided the moment
%of order $\alpha$ of the background is bounded for a given $\alpha
%>1$, this moment grows at last exponentially with respect to time.
Proceeding as in  Lemma \ref{lemma4} one  shows
 that the evolution of the moments  is polynomial in time, in perfect agreement with the evolution of the
corresponding moments for the solution of the fractional diffusion equation, as given
by  \eqref{mom-alphapiccolo}, \eqref{mom-alphagrande} (see Remark \ref{rem7}).

Having in mind the discussion in \cite{RT}, a further interesting analogy with the
linear diffusion is found by looking at the evolution of convex functionals along the
solution. Let $\phi(r)$, $ r \ge 0$ be a (smooth) convex function of $r$ and consider
\[
 \Phi(g)(t)= \int_\R \phi (g(v,t))\, \d v.
 \]
 Then, using equation  \fer{kin} we obtain
 \[
 \frac{\d}{\d t}\Phi(g) (t) =
\frac{\d}{\d t} \int_{\R} \phi(g(v, t)) \, \d v =  \int_{\R} \phi'
(g(v, t)) \frac{\partial g(v, t)}{\partial t}  \, \d v =
 \]
 \[
\frac{1}{\ve^\lambda} \int_{\RR} \phi' (g(v, t)) \left(
M_{\lambda,\ve}*g(v,t)- g(v,t)\right) \, \d v .
 \]
Thanks to the convexity of $\phi$,  for $r,s \ge 0$
 \[
\phi'(s) (r-s) \le \phi(r) - \phi(s),
 \]
and one obtains
\[
\frac{\d}{\d t} \int_{\R} \phi(g(v, t)) \, \d v \le
\frac{1}{\ve^\lambda} \int_{\R} \left( \phi( M_{\lambda,\ve}*g(v,t)) -
\phi( g(v,t)) \right) \, \d v.
 \]
Now, use the fact that $M_{\lambda,\ve}$ is a probability density, so
that  by Jensen's inequality
 \[
\int_{\R}\phi( M_{\lambda,\ve}*g(v,t)) \, \d v =  \int_{\R}
\phi\left(\int_{\R}g(v-w,t) M_{\lambda,\ve}(w)\, \d w \right)\,\d v  \le
 \]
 \[
\int_{\R^2} \phi(g(v-w,t)) M_{\lambda,\ve}(w)\, \d v \,\d w = \int_{\R}
\phi(g(v,t))\, \d v
 \]
 to conclude
\[
\frac{  \d}{\d t} \int_{\RR} \phi(g(v, t)) \, \d v \le 0.
 \]
Thus any convex functional is non-increasing along the solution to
the Rosenau type kinetic equation \fer{kin}.
%%%%%%%%%%%%%%%%%%%%%%%%%%%%%%%%%%%%%%%%%%%%%%%%%%%%%%%%%%%%%%%%%%%%%%%%%%
%%%%%%%%%%%%%%%%%%%%%%%%%%%%%%%%%%%%%%%%%%%%%%%%%%%%%%%%%%%%%%%%%%%%%%%%%%

\section{An approximation result}\label{general}

As briefly discussed in the Introduction, one of the main novelties that can be
extracted by the Rosenau approximation is that the kinetic model \fer{approx} has an
evident interest from the point of view of its numerical approximation. This feature
has been extensively investigated in the case of the linear diffusion in \cite{RT},
where it has been shown that the linear kinetic model represents a consistent
approximation of the heat equation even if the Maxwellian density generated by the
Rosenau idea is substituted by a different one, provided that some properties about
moments are fulfilled. One of the results of this investigation has been the inclusion
of a singular Maxwellian producing the central difference scheme, among the admissible
Maxwellian densities for the corresponding linear kinetic model. Trying to get a
similar result for the problem under consideration, we consider  a linear kinetic
equation of type \eqref{approx}
\[
\partial_t g_\varepsilon(v,t) =\frac 1{\varepsilon^\lambda}\left[M_{\lambda,\varepsilon}\ast g_\varepsilon(v,t)
-g_\varepsilon(v,t)\right],
\]
in which the Linnik density \eqref{Max-f} is replaced by a Maxwellian $M_\lambda$ with the
properties to be even and  the density of a centered distribution function belonging
to the domain of normal attraction of the stable law $L_\lambda(v)\d v$, of exponent
$\lambda\in (1,2)$. The evident advantage to work with a Maxwellian density different
from Linnik distribution is that, as we will see, one can resort to a density which is
explicitly known in the physical space.

In the rest of this section, we aim at proving that, with a suitable choice of
distance, at any fixed time the solutions to \fer{approx} and to the fractional
diffusion equation, are getting closer in terms of the parameter $\ve$. First of all
we underline that it is not even clear in which sense the solution to \eqref{approx}
represents  an approximation to the solution of the fractional diffusion equation
\eqref{frac-eq}  as $\eps$ tends to zero. This is in contrast with what happens to the
original Rosenau approximation to the heat equation \eqref{kin-heat}. In this case the
Maxwellian function ${\displaystyle N_\eps (v)= \frac 1{2\eps} \e^{-\frac {|v|}{\eps}}}$ has finite
moments of any order, and the meaning of approximation is standard. Let us consider
equation \eqref{kin-heat} in weak form
\[
\frac {\d}{\d t} \int_\R g_\eps (v, t) \phi (v) \d v = \frac 1{\eps ^2}\iint_{\R^2}
\left(\phi(v+w) -\phi(v) \right ) g_\eps (v,t)N_{\eps} (w)  \d v \d w.
\]
By  expanding $\phi(v+w)$ in Taylor series around $v$
\[
\phi(v+w)= \phi(v) + \phi'(v) w + \frac {\phi''(v)}2 w^2 + \frac {\phi'''(\tilde v)}{3!} w^3, \quad \tilde v \in (v,w)
\]
we get
\[
\frac {\d}{\d t} \int_\R  g_\eps (v, t) \phi (v) \d v = \frac 1{\eps ^2}\iint_{\R^2}
\left(\phi'(v) { w} + \frac {\phi''(v)}2  {w^2} + \frac {\phi'''(\tilde v)}{3!}  {w^3}
\right ) g_\eps (v,t) {N_{\eps} (w)}  \d v \d w.
\]
Since
\[
{\int_\R  w \, N_\eps (w) \d w=0,\quad \int_\R  w^2 \, N_\eps (w) \d w =  2 \eps^2,
\quad \int_\R |w|^3 \, N_\eps (w) \d w = 12 \eps^3},
\]
we end up with the equation
\[
\frac {\d}{\d t} \int _\R g_\eps (v, t) \phi (v) \d v  = \int _\R g_\eps (v, t) \phi''
(v) \d v + C(\eps),
\]
where the remainder satisfies
\[
|C(\eps)| \leq 2 {\eps} \|\phi '''\|_{L^\infty} \int_\R|w|^3 N(w) \d w \to 0, \quad \eps \to 0.
\]
In the case of the fractional diffusion approximation, the operator
$(\sqrt{-\Delta})^\lambda$ is non-local and the Maxwellian $M_{\lambda,\eps}$ has finite moments
only for $\alpha <\lambda$. This requires a different way of looking to the problem.
We begin by defining in which sense we can consider equation \eqref{approx} as an
approximation of the fractional diffusion equation.

We recall that  a centered  distribution function $F$ belongs to
the domain of normal attraction of the stable law $L_\lambda(v) \d
v$ if for any sequence of independent and identically distributed
real-valued random variables $(X_n)_{n\geq 1}$ with common
distribution function $F$,  the law of
\be\label{somma}
\frac{X_1+\dots+X_n }{n^{1/\lambda}}
\ee
converges weakly to the stable law $L_\lambda(v) \d v$.

Let us recall some properties of a distribution $F$ belonging to the domain of normal attraction of a stable law. More information about this topic can be found in the book
\cite{Ibra71} or, among others, in the papers \cite{BLM}, \cite{BLR}.
 It is well-known that a
centered distribution $F$ belongs to the
 domain of normal attraction of the $\lambda$-stable law \eqref{levy} with density
$L_\lambda(v)$ if and only if $F$ satisfies $|v|^\lambda F(v)\to
c$ as $v\to -\infty$ and $v^\lambda (1-F(v))\to c$ as $v\to
+\infty$ i.e.
\begin{equation}\label{cardis}
\begin{aligned}
&F(-v)=\frac{c}{|v|^\lambda}+S_1(-v) \ \ \ \ \ \ {\rm and }\ \ \ \
\
\ \ 1-F(v)=\frac{c}{v^\lambda}+S_2(v) \ \ \ \ \ \ (v>0)\\
&S_i(v)=o(|v|^{-\lambda})\ \ \  \ {\rm as}\ |v|\to +\infty, \ \ \
i=1,2\\
\end{aligned}
\end{equation}
where $c=\frac{\Gamma(\lambda)}{\pi}\sin\left(\frac{\pi\lambda}{2}\right)$.
Concerning the moments of the distribution function $F$ and of the distributions $F_n$
associated to the sum $(X_1+\dots+X_n )/{n^{1/\lambda}}$ considered in \eqref{somma},
we recall the following Propositions.

\begin{prop}[see \cite{Ibra71}, Theorem 2.6.4 page 84]
Let $F$  belong to the domain of normal attraction of $L_\lambda$. Then, for any
$\alpha$ such that $0< \alpha<\lambda$
\[
\int_{\R}|v|^{\alpha} \d F(v)<+\infty.
\]
\end{prop}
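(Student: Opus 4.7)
The plan is to express the $\alpha$-moment of $F$ in terms of its tails and then invoke the tail decay built into the characterization~\eqref{cardis} to conclude integrability exactly in the range $0<\alpha<\lambda$.

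First, I would use the classical layer-cake identity
\[
\int_{\R} |v|^{\alpha}\,\d F(v) \;=\; \alpha\int_0^{+\infty} v^{\alpha-1}\,\bigl[F(-v) + (1 - F(v))\bigr]\,\d v,
\]
valid for any probability distribution $F$ and any $\alpha>0$. This reduces the problem to estimating the survival function $v\mapsto F(-v)+(1-F(v))$ for $v>0$ and recognizing an integrable power.

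Next, I would invoke the characterization~\eqref{cardis}. Since $S_i(v)=o(|v|^{-\lambda})$ as $|v|\to+\infty$, there exists $v_0>0$ such that $|S_1(-v)|+|S_2(v)|\le c\,v^{-\lambda}$ for all $v\ge v_0$. Combined with the explicit leading terms $c/v^\lambda$ in~\eqref{cardis} this yields
\[
F(-v) + 1 - F(v) \;\le\; \frac{3c}{v^\lambda},\qquad v\ge v_0,
\]
while for $v\in(0,v_0)$ the survival function is trivially bounded by $1$.

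Finally, I would split the tail integral at $v_0$. On $(0,v_0)$ one has $\int_0^{v_0} v^{\alpha-1}\,\d v = v_0^{\alpha}/\alpha<+\infty$ because $\alpha>0$, whereas on $(v_0,+\infty)$
\[
\int_{v_0}^{+\infty} v^{\alpha-1}\cdot\frac{3c}{v^{\lambda}}\,\d v \;=\; 3c\int_{v_0}^{+\infty} v^{\alpha-\lambda-1}\,\d v,
\]
which is finite precisely because $\alpha-\lambda-1<-1$, i.e.\ $\alpha<\lambda$. Combining the two pieces gives $\int_{\R}|v|^{\alpha}\,\d F(v)<+\infty$.

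There is essentially no obstacle: the whole content of the statement is already encoded in the tail asymptotics~\eqref{cardis}, and the layer-cake identity automatically produces the sharp dichotomy between $\alpha<\lambda$ (integrable) and $\alpha\ge\lambda$ (divergent). The only point that deserves care is to choose $v_0$ large enough so that the remainders $S_i$ are dominated by the explicit $c/v^{\lambda}$ term, which is exactly what the ``little-$o$'' condition in~\eqref{cardis} provides.
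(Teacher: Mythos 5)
Your argument is correct. Note that the paper itself offers no proof of this proposition: it is quoted verbatim from Ibragimov--Linnik (Theorem 2.6.4), so there is no internal proof to compare against. Your route --- the layer-cake identity $\int_{\R}|v|^{\alpha}\,\d F(v)=\alpha\int_0^{\infty}v^{\alpha-1}\bigl[F(-v)+1-F(v)\bigr]\d v$ combined with the tail characterization \eqref{cardis} of the domain of normal attraction --- is the natural self-contained one, and every step checks out: the little-$o$ condition on the $S_i$ gives $F(-v)+1-F(v)\le 3c\,v^{-\lambda}$ beyond some $v_0$, the near-origin piece is handled by $\alpha>0$, and the tail piece converges exactly when $\alpha<\lambda$. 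Two minor points worth being aware of, neither of which affects validity: the layer-cake identity strictly involves the left limit $F((-v)^-)$ rather than $F(-v)$, but this is irrelevant under the integral sign; and your closing remark that $\alpha\ge\lambda$ forces divergence relies on the lower bound coming from $c=\frac{\Gamma(\lambda)}{\pi}\sin\bigl(\frac{\pi\lambda}{2}\bigr)>0$ for $0<\lambda<2$, which does hold here but is not part of what the proposition asserts.
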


\begin{prop}[see \cite{Ibra71}, Lemma 5.2.2 page 142]  \label{522}
Let $F_n$  denote the distribution function associated to the sum $(X_1+\dots+X_n)/
n^{1/\lambda}$ converging weakly to the stable law $L_\lambda(v)\d v$. Then, for any
$0<\alpha<\lambda$
\[
\int_{\R}|v|^{\alpha} \d F_n(v)
\]
is uniformly bounded with respect to $n$.
\end{prop}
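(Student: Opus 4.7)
The plan is to reduce the statement to a uniform tail bound on $|S_n|/n^{1/\lambda}$, where $S_n:=X_1+\dots+X_n$, of the form $C/t^\lambda$, and then recover the $\alpha^{\text{th}}$ moment bound by the layer-cake formula. Concretely,
\[
\int_\R |v|^\alpha\, \d F_n(v) \;=\; \alpha \int_0^\infty t^{\alpha-1}\, P\!\left(\frac{|S_n|}{n^{1/\lambda}}>t\right)\d t,
\]
so it is enough to exhibit $t_0>0$ and $C>0$ such that $P(|S_n|>t n^{1/\lambda})\leq C/t^\lambda$ uniformly in $n$ for every $t\geq t_0$. Indeed, since $\alpha<\lambda$, the integral then splits as $\int_0^{t_0}t^{\alpha-1}\,\d t + C\int_{t_0}^\infty t^{\alpha-1-\lambda}\,\d t<+\infty$, a bound independent of $n$.

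To establish the tail estimate I would run a classical truncation at level $\tau:=tn^{1/\lambda}$, writing $Y_i:=X_i\mathbf{1}_{\{|X_i|\leq\tau\}}$ and $Z_i:=X_i\mathbf{1}_{\{|X_i|>\tau\}}$. Since $S_n=\sum_i Y_i$ on the event that every $Z_i$ vanishes, we obtain
\[
P(|S_n|>\tau) \;\leq\; P(\exists\,i:\,Z_i\neq 0) \;+\; P\!\left(\Big|\sum_{i=1}^n Y_i\Big|>\tau\right).
\]
For the first term, the union bound together with the tail asymptotics \eqref{cardis} give $P(\exists\,i:Z_i\neq 0)\leq nP(|X_1|>\tau)\leq Cn/\tau^\lambda = C/t^\lambda$, which is exactly the sought order.

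For the second term, I would center first. Since $F$ is centered, $|E[Y_1]|=|E[X_1\mathbf{1}_{|X_1|>\tau}]|\leq C\tau^{1-\lambda}$ by \eqref{cardis} (this is where the hypothesis $\lambda>1$ enters), hence $n|E[Y_1]|\leq Ct^{1-\lambda}n^{1/\lambda}\leq\tau/2$ provided $t\geq t_0$ is chosen large enough. Chebyshev's inequality applied to the centered sum then yields
\[
P\!\left(\Big|\sum_{i=1}^n Y_i\Big|>\tau\right) \;\leq\; P\!\left(\Big|\sum_{i=1}^n(Y_i-E[Y_i])\Big|>\tau/2\right) \;\leq\; \frac{4n\,\mathrm{Var}(Y_1)}{\tau^2} \;\leq\; \frac{4nC\tau^{2-\lambda}}{\tau^2} \;=\; \frac{C'}{t^\lambda},
\]
where the second-moment estimate $E[Y_1^2]=E[X_1^2\mathbf{1}_{|X_1|\leq\tau}]\leq C\tau^{2-\lambda}$ follows again from \eqref{cardis} and is finite precisely because $\lambda<2$. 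Combining the two contributions produces the required uniform tail bound, and integrating completes the proof.

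The main obstacle is controlling the centering correction $nE[Y_1]$: its size is of order $\tau\cdot t^{-\lambda}$, so that it can be reabsorbed into $\tau/2$ only when $\lambda>1$. The whole truncation argument thus closes cleanly in the regime $\lambda\in(1,2)$ relevant to the paper, both sides of the decomposition matching the sharp order $t^{-\lambda}$ dictated by the heavy-tail index of $F$.
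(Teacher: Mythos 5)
Your proof is correct, but it is worth noting that the paper does not actually prove this statement: Proposition \ref{522} is imported verbatim from Ibragimov--Linnik (Lemma 5.2.2, p.~142) and used as a black box, so there is no internal argument to compare against. What you supply is a complete, self-contained, elementary derivation: the layer-cake reduction to a uniform tail bound $P(|S_n|>t n^{1/\lambda})\le C t^{-\lambda}$ for $t\ge t_0$, followed by the classical truncation at level $\tau=tn^{1/\lambda}$, a union bound for the large-jump part, and Chebyshev for the truncated, recentered sum. All the estimates check out: the first-moment tail $E[|X_1|\mathbf{1}_{|X_1|>\tau}]\le C\tau^{1-\lambda}$ needs $\lambda>1$, the truncated second moment $E[X_1^2\mathbf{1}_{|X_1|\le\tau}]\le C\tau^{2-\lambda}$ needs $\lambda<2$, and both error terms come out at the sharp order $t^{-\lambda}$, so the splitting $\int_0^{t_0}+\int_{t_0}^{\infty}$ converges precisely because $0<\alpha<\lambda$. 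You also correctly use the centering hypothesis on $F$ (standing assumption in the paper) to replace $E[Y_1]$ by $-E[X_1\mathbf{1}_{|X_1|>\tau}]$. Two small remarks: your argument only requires the one-sided consequence $P(|X_1|>v)\le C v^{-\lambda}$ of \eqref{cardis}, not the full asymptotic equivalence, which makes it slightly more general than the domain-of-normal-attraction hypothesis; and the restriction to $\lambda\in(1,2)$ that you flag is exactly the regime the paper works in, so nothing is lost. In short: correct, and a useful replacement for the external citation.
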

The behavior of $F$ in the physical space \eqref{cardis} leads to a characterization
of the domain of normal attraction of $L_\lambda$ in terms of characteristic
functions. Indeed, if $\Phi$ is the characteristic function of the distribution
function $F$ satisfying \eqref{cardis} then
\[
1-\Phi(\xi)=\left(1+v_0(\xi)\right)|\xi|^\lambda,
\]
where
\[
v_0 \in L^\infty(\R)\quad   \text{  and }\quad  |v_0(\xi)|=o(1), \quad |\xi|\to 0.
\]
In the following we will consider a stronger assumption on the characteristic function.  We will denote by
 $M_\lambda  $ any   density of
a centered distribution function belonging to the domain of normal attraction of the
stable law  with density $L_\lambda$, which has the extra property that the
Fourier-Stieltjes transform of the function  $M_\lambda $ satisfies
\begin{equation}\label{Fourier1}
1-\hat{M_\lambda}(\xi)=\left(1+v_0(\xi)\right)|\xi|^\lambda,
\end{equation}
where $v_0(\xi)$ is such that, for some  $\delta >0$
  \be\label{ogrande}
   v_0 \in
L^\infty(\R)\quad \text{ and }\quad   |v_0(\xi)|=O\left(|\xi|^\delta\right), \quad
|\xi|\to 0.
 \ee
A main example is furnished by the so-called {\it Barenblatt} function
\be\label{bar}
B_\lambda(v)=\frac{\alpha}{(1+(\beta v)^2)^{{(1+\lambda) }/{2}}},\quad v\in\R
\ee
where $\alpha,\beta>0$, \quad $\displaystyle{\frac{\beta}{\alpha}=\int_{\R}\frac{\d
v}{(1+v^2)^{{(1+\lambda)}/2}}}$ and $\displaystyle{\frac{\alpha}{\lambda
\beta^{1+\lambda}}=\frac{\Gamma(\lambda)\sin(\frac{\pi\lambda}2)}{\pi}}.$ This type
of functions is mainly related to the study of nonlinear equations for fast diffusion,
given by
 \be\label{nl2}
             \partial_t g(v,t) =  \partial_{vv}^2 g^p(v,t),
 \ee
as $p<1$. The Barenblatt function \fer{bar} corresponds to the case $p = (\lambda
-1)/(\lambda +1)$. The function $t^{(\lambda +1)/(2\lambda)}B_\lambda(t^{(\lambda
+1)/(2\lambda)}x)$  represents the intermediate asymptotic profile of the nonlinear
diffusion \fer{nl2} \cite{Vaz}. The case $p<1$  appears when modelling diffusion in
metals and ceramic materials. In these materials, in fact, over a wide range of
temperatures, the diffusion coefficient can be approximated as $u^{ -\alpha}$, where
$0 < \alpha < 2$ \cite{Ros2}.

The distribution function $F(x)=\int_{-\infty}^{x}B_{\lambda}(v) \d v$ is such that
$$
\lim_{x\to +\infty}x^\lambda(1-F(x))=\frac{\alpha}{\lambda\beta^{1+\lambda}}=\frac{\Gamma(\lambda)\sin\left(\frac{\pi\lambda}{2}\right )}{\pi}=c.
$$
Moreover for any $0<\delta<2$ there exists  a constant $C>0$ such that for any $x>0$ we have
$$
\left|  x^{\lambda+\delta}   (1-F(x))-c    x^\delta         \right|<C.
$$
This in enough to guarantee that the  Fourier transform of $B_\lambda$  satisfies the
extra property \eqref{Fourier1}-\eqref{ogrande} (see \cite{BLR}). Property
\eqref{ogrande} has been already considered in kinetic theory. In particular,  it has
been used to determine  the rate of convergence to equilibrium for the dissipative Kac
model \cite{BLR}.

Under this condition on the Maxwellian function, we can prove convergence  of the
approximated solution \eqref{approx} to the solution to the fractional diffusion
equation \eqref{frac-eq} in the Fourier-based distance $d_s$, as  $\ve \to 0$.

\begin{theo}\label{conveps}
Let  $1<\lambda <2$, and let $g(t)$ and $g_\varepsilon(t)$ be  the solutions of the
fractional diffusion equation \eqref{fraceq} and, respectively, of the Rosenau
approximation \eqref{approx}, corresponding to the same initial probability density
$g_0$. Let us suppose moreover that the Maxwellian $M_{\lambda}$ in \eqref{approx}
satisfies the extra properties \eqref{Fourier1}- \eqref{ogrande}. Then for any
$0<s<\lambda$
 there exists a positive  constant $C=C(\lambda, s, \delta)$ such that
$$
d_s(g(t), g_\varepsilon(t)) \leq
C t^{s/(\lambda + \delta)} \eps^{s\delta /(\lambda + \delta)}.
$$

\end{theo}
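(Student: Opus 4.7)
The plan is to work entirely in the Fourier variable, where both solutions are given explicitly. Since $g$ and $g_\eps$ share the initial datum $g_0$, one has
\[
\hat g(\xi,t) - \hat g_\eps(\xi,t) = \hat g_0(\xi)\left( \e^{-t|\xi|^\lambda} - \e^{-t\eps^{-\lambda}(1-\widehat M_\lambda(\eps\xi))} \right).
\]
Since $M_\lambda$ is an even probability density, $\widehat M_\lambda$ is real and $\leq 1$, so the exponent in the second term is nonnegative. Using $|\hat g_0|\leq 1$ together with assumption \eqref{Fourier1}, the two nonnegative exponents differ by exactly $t|\xi|^\lambda v_0(\eps\xi)$, and the elementary estimate $|\e^{-a}-\e^{-b}|\leq |a-b|$ (valid for $a,b\geq 0$) gives the first fundamental bound
\[
|\hat g(\xi,t) - \hat g_\eps(\xi,t)| \leq t\, |\xi|^\lambda\, |v_0(\eps\xi)|.
\]

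Next, I would combine the two pieces of information on $v_0$ contained in \eqref{ogrande} into a single uniform statement. Since $v_0\in L^\infty(\R)$ and $|v_0(\eta)|\leq C|\eta|^\delta$ for $|\eta|$ small (say $|\eta|\leq r$), patching these together produces a constant $K=K(v_0,\delta)$ with
\[
|v_0(\eta)|\leq K\,|\eta|^\delta, \qquad \forall\,\eta\in\R,
\]
simply because for $|\eta|>r$ one has $\|v_0\|_\infty\leq (\|v_0\|_\infty/r^\delta)\,|\eta|^\delta$. Applied with $\eta=\eps\xi$, this upgrades the previous bound to
\[
|\hat g(\xi,t) - \hat g_\eps(\xi,t)| \leq K\,t\,\eps^\delta\,|\xi|^{\lambda+\delta}.
\]
On the other hand, the trivial estimate $|\hat g(\xi,t) - \hat g_\eps(\xi,t)| \leq 2$ is always available.

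The final step is an interpolation: using $\min(x,y)\leq x^\theta y^{1-\theta}$ for $x,y\geq 0$ and $\theta\in[0,1]$, one obtains for every such $\theta$
\[
\frac{|\hat g(\xi,t) - \hat g_\eps(\xi,t)|}{|\xi|^s} \leq 2^{1-\theta}\,K^\theta\, t^\theta\, \eps^{\delta\theta}\, |\xi|^{\theta(\lambda+\delta)-s}.
\]
The choice $\theta=s/(\lambda+\delta)$, which lies strictly in $(0,1)$ because $0<s<\lambda<\lambda+\delta$, kills the $|\xi|$-dependence and yields the claim with $C=2^{1-\theta}K^\theta$. The main technical subtlety lies in the second step: the pointwise boundedness of $v_0$ and its $O(|\eta|^\delta)$ behavior near zero must be blended into a single uniform H\"older-type estimate valid at every scale. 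Once this observation is made, no splitting of the Fourier domain into small-$\xi$/large-$\xi$ regimes is needed, which is what keeps the argument short and clean.
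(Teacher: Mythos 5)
Your proof is correct and follows essentially the same route as the paper's: both rest on $|\hat g_0|\le 1$, the Lipschitz bound $|\e^{-x}-\e^{-y}|\le|x-y|$ for $x,y\ge 0$ applied to the two explicit Fourier exponents, the trivial bound $2$, and a balancing of the two estimates (the paper optimizes a cut-off radius $R$, you interpolate pointwise via $\min(x,y)\le x^\theta y^{1-\theta}$ with $\theta=s/(\lambda+\delta)$ --- the same computation in different clothing). The one place where you are more careful than the paper is the upgrade of \eqref{ogrande} to the global bound $|v_0(\eta)|\le K|\eta|^\delta$ for all $\eta$: the paper applies the small-argument asymptotic on the whole ball $|\xi|\le R$ even though $\eps R$ need not be small there, so your explicit patching of the $L^\infty$ bound with the local $O(|\eta|^\delta)$ behavior supplies precisely the justification that step requires.
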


\begin{proof}
Since $g_0$ is a probability density, $|\hat g_0(\xi)| \le 1$, and we have
\[
d_s(g(t),g_\varepsilon (t))=
\sup_{\xi\neq 0}
\frac{\left|\hat g_0(\xi)
\left(
{\rm e}^{-|\xi|^\lambda t}-{\rm e}^{-{\eps^{-\lambda}}t\left(1-\hat M_{\lambda,\eps}(\xi)\right)}
\right)
\right|}{|\xi|^s}
 \leq \sup_{\xi\neq 0}
\frac{
\left|  {\rm e}^{-|\xi|^\lambda t }-{\rm e}^{-\eps^{-\lambda}t\left(1-\hat M_{\lambda,\eps}(\xi)\right)}\right|}
{|\xi|^s}.
\]
Therefore, for any $R>0$
 \[
d_s(g(t),g_\varepsilon (t))\leq \frac{2}{R^s}+\sup_{0<|\xi|\leq R}
\frac{\left|  {\rm e}^{-|\xi|^\lambda t}-{\rm e}^{-{\eps^{-\lambda}}t \left(1-\widehat
M_{\lambda, \eps}(\xi )\right)}\right|} {|\xi|^s}.
\]
 Thanks to the inequality $|{\rm e}^{-x}-{\rm e}^{-y}|\leq |x-y|$, valid
for any $x,y\geq 0$, we get
\[
\sup_{0<|\xi|\leq R}\frac{\left||\xi|^\lambda
t-{\eps^{-\lambda}}{t}\left(1-\widehat M_{\lambda,\eps}(\xi)\right)\right|} {|\xi|^s}
\leq \sup_{0<|\xi|\leq R}\frac{{t|\xi|^\lambda} \left|v_0\left({\varepsilon\xi}\right)\right|}{{|\xi|^s}}.
 \]
In the last inequality we used the expression of the Fourier transform of $M_{\lambda}$ \eqref{Fourier1}.
Thanks to \eqref{ogrande} we obtain
 $$
 \sup_{0<|\xi|\leq R}\frac{{t|\xi|^\lambda} \left|v_0\left({\varepsilon\xi}\right)\right|}{{|\xi|^s}} \leq C \sup_{0<|\xi|\leq R}
\frac {t|\xi|^\lambda \left|\varepsilon \xi \right|^{\delta}}{{|\xi|^s}}\leq
C t R^{\lambda+\delta-s}\varepsilon^\delta.
$$
 Finally,  choose $\displaystyle{R= \left(\frac 2 {Ct\eps^\delta}\right )^{1/(\lambda +\delta)}}$ to obtain
$$
d_s(g(t),g_\varepsilon (t)) \leq
C t^{s/(\lambda + \delta)} \eps^{s\delta /(\lambda + \delta)}.
$$

\end{proof}

%%%%%%%%%%%%%%%%%%%%%%%%%%%%%%%%%%%%%%%%%%%%%%%%%%%%%%%%%%

\section{Large time behavior}\label{main-sect}

The result of the previous section justifies the choice of a
Barenblatt type Maxwellian density in the linear kinetic model
\fer{Ros2}, to obtain an explicit in space linear kinetic equation
which approximates, at any fixed time, the fractional diffusion
equation. The result of Theorem \ref{conveps}, however, is such
that the rate of convergence in the $d_s$-metric is
time-dependent, and fails as $t \to \infty$. As the analysis in
\cite{RT} shows, the weakness of this result with respect to time
could be generated by the choice of a general Maxwellian in the
linear kinetic model, that, while maintaining the kinetic form of
the approximation, is loosing the precise shape of the Maxwellian
predicted by the Rosenau idea.  Consequently, in this section we
will restrict the study of the large time behavior of the solution
to the approximated Rosenau equation \eqref{Ros2}, where
$M_{\lambda,\ve}$ is a Linnik distribution \eqref{Max-f}.
{Provided that we discard the singular part of the approximating
solution, our analysis will confirm that in this case the solution
to \fer{Ros2} behaves  like the fractional diffusion equation for
large times.}

\subsection{Convergence in the Fourier based metric}
In analogy with  the solution of the fractional diffusion equation,  for all $\xi \in
\R$ we have
\[
\lim_{t \to +\infty} \left(\hat{g_\varepsilon}(t,\xi)-{\rm
e}^{-|\xi|^\lambda t} \right )= 0.
\]
This convergence can be refined using the Fourier-based distance
 $d_s$.
 In order to capture the asymptotic profile in the limit
for $t\to +\infty$ we  consider the scaled solution obtained by the change of variable
$\xi\longmapsto \frac{\xi}{(1+t)^{1/\lambda}}$. Let us denote by
 \be \label{riscalate}
\begin{aligned}
&h(v,t)=(1+t)^{1/\lambda}g((1+t)^{1/\lambda}v,t)\\
&h_\varepsilon(v,t)=(1+t)^{1/\lambda}g_\varepsilon((1+t)^{1/\lambda}v,t).
\end{aligned}
 \ee
 the scaled solutions of the fractional diffusion equation and of the approximated
equation respectively.
 Then using
the explicit representation of the solution established in
\eqref{p_epslambda-spazio}-\eqref{p_epslambda-fur}, we get
\[
\begin{aligned}
\hat{h}_\varepsilon(\xi,t)&=\hat{P}_{\lambda,\varepsilon}\left(\frac{\xi}{(1+t)^{1/\lambda}}, t\right)\hat
{g}_0\left(\frac{\xi}{(1+t)^{1/\lambda}}\right)\\
&= {\rm exp}\left(-{\varepsilon}^{-\lambda}t\left(1-\hat
{M}_{\lambda,\varepsilon}\left(\frac{\xi}{(1+t)^{1/\lambda}}\right)\right)
\right)\hat{g}_0\left(\frac{\xi}{(1+t)^{1/\lambda}}\right)
\end{aligned}
\]
and  since
\[
{\varepsilon}^{-\lambda}t\left(1-\hat
{M}_{\lambda,\varepsilon}\left(\frac{\xi}{(1+t)^{1/\lambda}}\right)\right)=\frac{t|\xi|^\lambda}{1+t+\varepsilon^{\lambda}|\xi|^\lambda}
\]
for $t\to +\infty$ we get, for $\xi \in \R$ and $\eps >0$ fixed
$$
\lim_{t \to +\infty} \hat{h}_\varepsilon(\xi,t) ={\rm
exp}\left({-|\xi|^\lambda }\right){\hat g}_0(0).
$$
\begin{prop} \label{dscon}
Let  $1<\lambda <2$, and let $h(t)$ and $h_\varepsilon(t)$ be  the solutions of
\eqref{riscalate}, the scaled fractional diffusion equation \eqref{fraceq} and,
respectively, of its Rosenau approximation \eqref{Ros2}, corresponding to the same
initial probability density $g_0$. If $d_s(g_0,L_\lambda)<+\infty$ for some
$0<s<\lambda$, then  there exists a positive constant $C = C(\lambda, s)$ such that
\be\label{prima}
d_s(h(t), h_\varepsilon(t))\leq C\ \varepsilon
^{s/2}\frac{t^{s/2\lambda}}{(1+t)^{s/\lambda}},
\ee
and
\be\label{seconda}
d_s(h_\varepsilon(t), L_\lambda)\leq C \left(\frac {1}{(1+t)^{s/\lambda}}+
\varepsilon^{s/2}\frac{t^{s/2\lambda}}{(1+t)^{s/\lambda}}\right).
\ee
\end{prop}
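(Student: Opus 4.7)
The plan is to treat the two bounds separately, and then combine them via the triangle inequality. For \eqref{prima}, the explicit formula \eqref{p_epslambda-spazio}--\eqref{p_epslambda-fur}, together with the rescaling performed in \eqref{riscalate}, yields
\[
\widehat h(\xi,t)-\widehat h_\varepsilon(\xi,t)=\widehat g_0\!\left(\tfrac{\xi}{(1+t)^{1/\lambda}}\right)\!\left[\exp\!\left(-\tfrac{t|\xi|^\lambda}{1+t}\right)-\exp\!\left(-\tfrac{t|\xi|^\lambda}{1+t+\varepsilon^\lambda|\xi|^\lambda}\right)\right].
\]
Since $|\widehat g_0|\le 1$ and $|e^{-x}-e^{-y}|\le|x-y|$ for $x,y\ge 0$, I would immediately get
\[
\frac{|\widehat h(\xi,t)-\widehat h_\varepsilon(\xi,t)|}{|\xi|^s}\le\frac{t\,\varepsilon^\lambda|\xi|^{2\lambda-s}}{(1+t)\bigl(1+t+\varepsilon^\lambda|\xi|^\lambda\bigr)}.
\]

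This quantity is unbounded in $\xi$, so in the spirit of the proof of Theorem \ref{conveps} I would split the supremum at an adjustable cutoff $R>0$, using the trivial estimate $|\widehat h-\widehat h_\varepsilon|\le 2$ on $|\xi|>R$ and bounding the denominator below by $(1+t)^2$ on $|\xi|\le R$:
\[
d_s(h(t),h_\varepsilon(t))\le\frac{2}{R^s}+\frac{t\,\varepsilon^\lambda R^{2\lambda-s}}{(1+t)^2}.
\]
The optimal choice is $R^{2\lambda}\sim(1+t)^2/(t\varepsilon^\lambda)$, which converts both terms into a common order $\varepsilon^{s/2}t^{s/(2\lambda)}(1+t)^{-s/\lambda}$, establishing \eqref{prima}.

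For \eqref{seconda}, I would first estimate $d_s(h(t),L_\lambda)$ by exploiting the algebraic identity
\[
\widehat h(\xi,t)-\widehat L_\lambda(\xi)=e^{-t|\eta|^\lambda}\bigl(\widehat g_0(\eta)-\widehat L_\lambda(\eta)\bigr),\qquad\eta:=\tfrac{\xi}{(1+t)^{1/\lambda}},
\]
which comes from the relation $\widehat L_\lambda(\xi)=e^{-(1+t)|\eta|^\lambda}=e^{-t|\eta|^\lambda}\,\widehat L_\lambda(\eta)$. Dividing by $|\xi|^s=(1+t)^{s/\lambda}|\eta|^s$ and using $e^{-t|\eta|^\lambda}\le 1$, I obtain at once
\[
d_s(h(t),L_\lambda)\le\frac{d_s(g_0,L_\lambda)}{(1+t)^{s/\lambda}},
\]
which is finite by the hypothesis on the initial datum. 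The triangle inequality $d_s(h_\varepsilon(t),L_\lambda)\le d_s(h_\varepsilon(t),h(t))+d_s(h(t),L_\lambda)$ combined with \eqref{prima} then yields \eqref{seconda}.

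The main obstacle is the optimization step in the first part: because the pointwise estimate in $\xi$ diverges for large $|\xi|$, one cannot avoid a cutoff argument, and the precise exponent $\varepsilon^{s/2}t^{s/(2\lambda)}$ comes from balancing two terms whose dependence on $R$ is asymmetric. Once the right balance is identified, the rest of the argument reduces to transparent manipulations of Fourier exponentials and the ideal-metric properties of $d_s$.
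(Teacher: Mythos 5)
Your proposal is correct and follows essentially the same route as the paper: the bound \eqref{prima} is obtained by the same $|e^{-x}-e^{-y}|\le|x-y|$ estimate, cutoff at $|\xi|=R$, and optimization in $R$ that the paper packages inside Theorem \ref{conveps} (applied with $\delta=\lambda$ for the Linnik Maxwellian) together with the scaling property \eqref{scala} of $d_s$, while you simply rerun that argument directly in the rescaled variable. The second bound is proved exactly as in the paper, via the triangle inequality and the identity $\widehat h(\xi,t)-\widehat L_\lambda(\xi)=e^{-t|\eta|^\lambda}\bigl(\widehat g_0(\eta)-\widehat L_\lambda(\eta)\bigr)$ giving $d_s(h(t),L_\lambda)\le (1+t)^{-s/\lambda}d_s(g_0,L_\lambda)$.
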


\begin{proof}
By the scaling rule \eqref{scala}, we have
\[
d_s(h(t), h_\varepsilon(t))= \frac 1{(1+t)^{s/\lambda}} d_s(g(t), g_\eps(t)).
\]
The first inequality \eqref{prima} follows therefore from Theorem \ref{conveps}. Indeed, ${\displaystyle \widehat M_\lambda (\xi) =  \frac 1{1+|\xi|^\lambda}}$ and so
\[
1-\widehat M_{\lambda}(\xi)= |\xi|^\lambda \left(1- \frac {|\xi|^\lambda}{1+|\xi|^\lambda}\right ) = |\xi|^\lambda \left(1+v_0(\xi)\right )
\]
with
\[
v_0(\xi)= - \frac {|\xi|^\lambda}{1+|\xi|^\lambda} = O(|\xi|^\lambda), \quad | \xi| \to 0.
\]

The second inequality \eqref{seconda} is a consequence of the first one. Indeed, we apply  the triangular
inequality
 \[
d_s(h_\varepsilon(t), L_\lambda)\leq  d_s\left(h_\varepsilon(t),
h(t)\right)+d_s\left(h(t),L_\lambda\right),
 \]
where
\[
\begin{aligned}
d_s\left(h(t),L_\lambda\right)=&\sup_{\xi\neq 0}\frac{\left|\hat
g_0\left(\frac{\xi}{(1+t)^{1/\lambda}}\right){\rm
exp}(-\frac{|\xi|^\lambda t}{1+t})-{\rm exp}(-|\xi|^\lambda
)\right|}{|\xi|^{s}}
\leq \frac{1}{(1+t)^{s/\lambda}}d_s(g_0,L_\lambda).
\end{aligned}
 \]
\end{proof}

\subsection{Strong convergence in $L^1$ of a regularized solution}

As given by expression \fer{fu2}, the fundamental solution $P_{\lambda,\eps}(\cdot
,t)$ contains a singular part, with a size which is exponentially decaying to zero.
The effect of this singular part on the solution to the appro\-xi\-mation is clear. For
this reason, we study here the large time behavior of the regularized part. The main
result will be that this suitable regularized solution to the Rosenau approximation
converges in strong sense (namely in $L^1$-norm) to the solution of the fractional
diffusion equation.

We define the regularized fundamental solution to the Rosenau approximation by
 \be\label{rego1}
 g_{\varepsilon,reg}(v,t)=P_{\lambda,\eps,
reg}(\cdot , t)\ast g_0(v)
 \ee
where
\begin{equation}\label{nucleoreg}
P_{\lambda, \eps,reg}(v,t)=P_{\lambda,\eps}(v,t)+{\rm e}^{-\ve^{-\lambda} t}\left(  M_{\lambda,\eps}(v) - \delta_0(v)\right)
\end{equation}

\begin{rem}
Note that function \fer{nucleoreg} is obtained from \fer{fu2} by substituting the
singular part with the order zero term of the sum. In this way, the difference between
the fundamental solution and the regularized one is still exponentially decaying to
zero both with respect to time and $\ve$, with the additional property to vanish at
the point $\xi= 0$. In other words, the regularized solution is constructed in such a
way that the masses of $P_{\lambda,\eps}(\cdot,t)$ and $P_{\lambda,
\eps,reg}(\cdot,t)$ coincide.
\end{rem}

\begin{rem}
It is immediate to prove that  $P_{\lambda, \eps,reg}(t)\in L^1(\R)$ for each value of
$t>0$. Indeed
 \be\label{espressione} P_{\lambda,\eps,reg}(t)={\rm e}^{-\ve^{-\lambda}
t}\sum_{n=1}^{+\infty}\left({t\eps^{-\lambda}}\right)^n\frac 1{n!}\
M^{*n}_{\lambda,\eps}+{\rm e}^{-\ve^{-\lambda} t}M_{\lambda,\eps},
 \ee
 and the
Maxwellian term $M_{\lambda,\eps}$ belongs to  $L^1(\R)$. Then the series in \fer{espressione}
converges in $L^1(\R)$ for any $t>0$.
\end{rem}
 The main result of this paper is contained in the following

\begin{theo}\label{main}
Let $1<\lambda <2$, and let $g(t)$ and $ g_{\varepsilon,reg}(t)$ denote  the solutions
of the Cauchy problem for the fractional diffusion equation \eqref{fraceq} and,
respectively,  the  solution of the regularized Rosenau approximation \fer{rego1},
corresponding to the same initial density $g_0$. Then, if for all $0<\alpha<\lambda$,
$g_0$ has bounded moment of order $\alpha$
$$
\lim_{t\to +\infty}\left\|g_{\eps,reg}(t)-g(t)\right\|_{L^1}=0
$$
\end{theo}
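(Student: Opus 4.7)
The strategy is to upgrade the Fourier-based convergence of Proposition \ref{dscon} to strong $L^1$-convergence by using the interpolation inequalities \fer{ine!} in tandem. I would first pass to the self-similar variables of \fer{riscalate}, setting
\[
h_{\varepsilon,reg}(v,t) = (1+t)^{1/\lambda}\, g_{\varepsilon,reg}\bigl((1+t)^{1/\lambda}v,\,t\bigr),
\]
so that, by the $L^1$ scaling invariance, the problem becomes
$\|h_{\varepsilon,reg}(t) - h(t)\|_{L^1} \to 0$. The advantage of this change of variables is that both rescaled profiles share the same pointwise Fourier limit $\widehat L_\lambda(\xi)$ as $t \to +\infty$, which localises the mechanism of convergence.

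Next, I would apply the first interpolation in \fer{ine!} to $f = h_{\varepsilon,reg}(t) - h(t)$: for some $\alpha \in (0, \lambda)$ for which $g_0$ has a bounded moment of order $\alpha$,
\[
\|f\|_{L^1} \le C\, \|f\|_{L^2}^{2\alpha/(1+2\alpha)}\, m_\alpha(f)^{1/(1+2\alpha)}.
\]
Uniform boundedness of $m_\alpha(f)$ in $t$ comes from the triangular inequality: for $h(t)$ the combination of the moment bounds \fer{mom-alphapiccolo}--\fer{mom-alphagrande} with the rescaling gives $m_\alpha(h(t)) \le C$; for $h_{\varepsilon,reg}(t)$ the Wild-sum representation \fer{espressione} allows to bound moments iterate by iterate using Proposition \ref{522}, since $M_\lambda$ lies in the domain of normal attraction of $L_\lambda$. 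So the problem reduces to showing $\|h_{\varepsilon,reg}(t) - h(t)\|_{L^2} \to 0$.

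For the $L^2$-decay, I would apply the second interpolation in \fer{ine!}, for some $s \in (0, \lambda - 1/2)$,
\[
\|f\|_{L^2} \le C\, \Bigl(\sup_{\xi \ne 0} \frac{|\hat f(\xi)|}{|\xi|^s}\Bigr)^{2s/(1+4s)}\, \|f\|_{\dot H^s}^{(1+2s)/(1+4s)}.
\]
The supremum factor is controlled by decomposing $h_{\varepsilon,reg} - h = (h_\varepsilon - h) + (h_{\varepsilon,reg} - h_\varepsilon)$: Proposition \ref{dscon} gives a rate $\varepsilon^{s/2}\, t^{s/(2\lambda)}(1+t)^{-s/\lambda} \sim t^{-s/(2\lambda)}$ on the first summand, while the second summand, whose Fourier transform equals $\e^{-\varepsilon^{-\lambda}t}\bigl(\widehat M_{\lambda,\varepsilon}(\xi/(1+t)^{1/\lambda}) - 1\bigr)\widehat g_0(\xi/(1+t)^{1/\lambda})$, is exponentially small in $t$.

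The genuine obstacle is a $t$-uniform bound on $\|h_{\varepsilon,reg}(t) - h(t)\|_{\dot H^s}$. Using the explicit Fourier expressions
\[
\widehat h(\xi,t) = \exp\Bigl(-\tfrac{t|\xi|^\lambda}{1+t}\Bigr)\widehat g_0\Bigl(\tfrac{\xi}{(1+t)^{1/\lambda}}\Bigr), \quad \widehat h_\varepsilon(\xi,t) = \exp\Bigl(-\tfrac{t|\xi|^\lambda}{1+t+\varepsilon^\lambda|\xi|^\lambda}\Bigr)\widehat g_0\Bigl(\tfrac{\xi}{(1+t)^{1/\lambda}}\Bigr),
\]
the point is that the Rosenau exponent saturates at $-t/\varepsilon^\lambda$ as $|\xi| \to +\infty$, so $\widehat h_\varepsilon$ lacks the exponential high-frequency suppression that $\widehat h$ enjoys. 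This is precisely the defect that the regularization is designed to fix: the kernel $\widehat P_{\lambda,\varepsilon,reg}$ decays like $|\xi|^{-\lambda}$ at infinity, with coefficient carrying the exponentially small prefactor $\e^{-\varepsilon^{-\lambda}t}$. I would split $\int |\xi|^{2s}|\widehat h_{\varepsilon,reg} - \widehat h|^2\, \d\xi$ at a radius $R = R(t)$: on the low regime $|\xi| \le R$ the two Fourier transforms are close and the integrand is controlled by the gaussian-type factor $\e^{-2|\xi|^\lambda}$; on the high regime $|\xi| > R$ one exploits both the $|\xi|^{-\lambda}$-decay and the $\e^{-\varepsilon^{-\lambda}t}$-prefactor of $\widehat h_{\varepsilon,reg}$ (with $s < \lambda - 1/2$ ensuring integrability) together with the exponential decay of $\widehat h$. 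Multiplying the decaying $d_s$-factor with the $\dot H^s$ bound then forces $\|f\|_{L^2} \to 0$, and the interpolation closes the argument.
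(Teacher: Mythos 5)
Your proposal is correct and follows the same overall architecture as the paper's proof: pass to the self-similar variables, then chain the two interpolation inequalities \fer{ine!} so that the $L^1$ difference is controlled by a decaying Fourier-based distance (Proposition \ref{dscon} plus the exponentially small contribution of the regularization term), by uniformly bounded moments of order $\alpha$ (obtained, as you indicate, term by term in the Wild sum via Proposition \ref{522}), and by a homogeneous Sobolev norm. Two points of divergence are worth recording. First, the paper reduces at the outset to the kernels, writing $\|g(t)-g_{\eps,reg}(t)\|_{L^1}\le\|P_\lambda(t)-P_{\lambda,\eps,reg}(t)\|_{L^1}$ by Young's inequality, so that all three estimates (Lemmas \ref{lemma3}, \ref{lemma4}, \ref{lemma5}) concern $\tilde P_\lambda$ and $\tilde P_{\lambda,\eps,reg}$ alone and $g_0$ disappears; your version, which keeps $g_0$ inside, works as well but is what forces you to invoke the moment hypothesis on $g_0$. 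Second, and more substantially, the paper neither proves nor needs the $t$-uniform $\dot H^s$ bound that you single out as the genuine obstacle: Lemma \ref{lemma5} only yields
\[
\left\|\tilde P_{\lambda,\eps,reg}(t)\right\|_{\dot H^s}\le C_2\,\eps^{-(s+1/2)}(1+t)^{(s+1/2)(1/\lambda-\beta)},
\]
a polynomially growing bound obtained through a Duhamel-type differential inequality for the quantity $I_\eps(t)$, and the final exponent $\delta(s,\alpha)$ is made negative by choosing $\beta$ close to $1/\lambda$, i.e.\ the decay of the $d_s$ factor is used to absorb the residual growth of the Sobolev norm. Your direct Fourier splitting at a $t$-dependent radius does in fact produce a bound uniform in $t\ge 1$ (the low-frequency region contributes $(1+t)^{(2s+1)/\lambda}\cdot O(t^{-(2s+1)/\lambda})=O(1)$ after the change of variables, and the high-frequency tail carries the prefactor $\e^{-\eps^{-\lambda}t}$, with $s<\lambda-1/2$ ensuring integrability of $|\eta|^{2s-2\lambda}$), so your route is viable and even slightly sharper than the paper's; the price is that this splitting, only sketched in your proposal, must be carried out carefully, whereas the paper trades sharpness for an estimate that closes with less effort. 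Note also that the paper's Lemma \ref{lemma5} is stated under the marginally stronger restriction $s<(\lambda-1)/2$, needed there for the integrability of $|\eta|^{2s}\widehat M_\lambda(\eta)$ in the term $A_2$.
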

By Theorem \ref{main} and Proposition \ref{prop1} we get immediately the following Corollary.
\begin{coro}
Let $1<\lambda <2$, and let $g_0$ be  a probability density with bounded moment of
order $\alpha$,  for all $0<\alpha<\lambda$. Then
$$
\lim_{t\to +\infty}\left\|g_{\eps,reg}(t)-\frac 1{t^{1/\lambda}}L_\lambda \left( \frac \cdot {t^{1/\lambda}}\right )\right\|_{L^1}=0
$$
with $L_\lambda$ the L\'evy symmetric stable distribution \eqref{levy}.
\end{coro}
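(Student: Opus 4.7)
The plan is to obtain this statement as a direct consequence of the two previous results, Theorem \ref{main} and Proposition \ref{prop1}, via the triangle inequality in $L^1(\R)$. Concretely, denoting by $g(t)$ the solution of the fractional diffusion equation \eqref{fraceq} with the same initial datum $g_0$, I would insert $g(t)$ as an intermediate term and write, for every $t>0$,
$$\left\|g_{\eps,reg}(t)-\frac 1{t^{1/\lambda}}L_\lambda \!\left( \frac{\cdot}{t^{1/\lambda}}\right )\right\|_{L^1} \leq \left\|g_{\eps,reg}(t) - g(t)\right\|_{L^1} + \left\|g(t) - \frac 1{t^{1/\lambda}}L_\lambda\!\left(\frac{\cdot}{t^{1/\lambda}}\right)\right\|_{L^1}.$$

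Then I would verify that the hypotheses of the corollary match those of each ingredient. The assumption that $g_0$ is a probability density with bounded moment of every order $\alpha\in(0,\lambda)$ is precisely the requirement of Theorem \ref{main}, so the first summand on the right-hand side tends to zero as $t\to+\infty$. Proposition \ref{prop1} only asks that $g_0$ be a probability density, which is obviously satisfied, so the second summand likewise tends to zero as $t\to+\infty$. Passing to the limit $t\to+\infty$ in the triangle inequality yields the desired conclusion.

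There is essentially no obstacle in this argument beyond the bookkeeping of hypotheses, since the two non-trivial limits have already been established earlier in the paper. The only mild point worth making explicit is that Theorem \ref{main} is stated with the Linnik Maxwellian \eqref{Max-f} in mind (so that the regularized solution $g_{\eps,reg}$ is defined via \eqref{rego1}--\eqref{nucleoreg}), which is consistent with the use of $L_\lambda$ as the asymptotic profile in the statement of the corollary.
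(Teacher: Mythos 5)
Your proposal is correct and is exactly the argument the paper intends: the corollary is stated as an immediate consequence of Theorem \ref{main} and Proposition \ref{prop1}, combined by the triangle inequality with $g(t)$ as the intermediate term. The hypothesis check is also right, since Proposition \ref{prop1} needs only that $g_0$ be a probability density while Theorem \ref{main} uses the bounded moments of order $\alpha<\lambda$.
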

The proof of Theorem  \ref{main} follows by various steps, we will split into different
lemmas, that we will prove below. The first one is concerned with the convergence
(after scaling) of the fundamental solution of the fractional diffusion equation to
the regularized fundamental solution.
For the sake of simplicity let us denote
\begin{equation}\label{fundresc}
\begin{aligned}
& \tilde P_{\lambda}(v,t)=(1+t)^{1/\lambda}P_{\lambda}((1+t)^{1/\lambda}v,
t),\\
& \tilde P_{\lambda, \eps, reg}(v,t)=(1+t)^{1/\lambda}P_{\lambda,\varepsilon, reg}((1+t)^{1/\lambda}v,
t).
\end{aligned}
\end{equation}
The
result follows in consequence of Proposition \ref{dscon}.
\begin{lem}\label{lemma3}
Let $1<\lambda<2$. For any $0<s<\lambda$ there exists a positive constant $C =
C(\lambda,s)$  such that
$$
d_s(\tilde P_\lambda (t),\tilde P_{\lambda,\eps,reg}(t)) \leq C
\frac{\eps^{ s/2}}{(1+t)^{s/(2\lambda)}}, \quad t>0.
$$
\end{lem}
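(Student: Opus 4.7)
The argument reduces, via the scaling rule \eqref{scala}, to estimating the unrescaled distance $d_s(P_\lambda(t), P_{\lambda,\eps,reg}(t))$, then splits the difference of Fourier transforms into a piece already controlled in the proof of Theorem~\ref{conveps} plus an exponentially small correction coming from the regularization.

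First, I observe that both $\tilde P_\lambda(\cdot,t)$ and $\tilde P_{\lambda,\eps,reg}(\cdot,t)$ are probability densities. For $\tilde P_\lambda$ this is clear since $L_\lambda$ is. For $\tilde P_{\lambda,\eps,reg}$ this follows from the Wild-sum representation \eqref{espressione}, which exhibits $P_{\lambda,\eps,reg}(\cdot, t)$ as a nonnegative element of $L^1(\R)$ whose total mass equals $e^{-\eps^{-\lambda}t}\bigl(1 + \sum_{n\geq 1}(t\eps^{-\lambda})^n/n!\bigr)=1$. The scaling rule \eqref{scala} then gives
\[
d_s(\tilde P_\lambda(t), \tilde P_{\lambda,\eps,reg}(t)) = \frac{1}{(1+t)^{s/\lambda}}\, d_s(P_\lambda(t), P_{\lambda,\eps,reg}(t)).
\]

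Second, using \eqref{p_epslambda-fur} and \eqref{nucleoreg} I write
\[
\widehat{P_\lambda}(\xi,t) - \widehat{P_{\lambda,\eps,reg}}(\xi,t) = \underbrace{\Bigl(e^{-|\xi|^\lambda t} - e^{-\eps^{-\lambda}t(1 - \widehat M_{\lambda,\eps}(\xi))}\Bigr)}_{(A)} + \underbrace{e^{-\eps^{-\lambda}t}\bigl(1 - \widehat M_{\lambda,\eps}(\xi)\bigr)}_{(B)}.
\]
For (A), I apply the estimate established in the proof of Theorem~\ref{conveps} (which uses only $|\hat g_0|\leq 1$, hence goes through even formally with $g_0=\delta_0$): the Linnik Maxwellian satisfies \eqref{Fourier1}--\eqref{ogrande} with $\delta=\lambda$ since $v_0(\xi)=-|\xi|^\lambda/(1+|\xi|^\lambda)=O(|\xi|^\lambda)$, and Theorem~\ref{conveps} with $\delta=\lambda$ yields $\sup_{\xi\neq 0}|(A)|/|\xi|^s \leq C t^{s/(2\lambda)}\eps^{s/2}$. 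For (B), $|1-\widehat M_{\lambda,\eps}(\xi)| = |\eps\xi|^\lambda/(1+|\eps\xi|^\lambda)$, and splitting on $|\eps\xi|\leq 1$ (where it is bounded by $|\eps\xi|^\lambda$ and $|\xi|^{\lambda-s}\leq\eps^{s-\lambda}$) versus $|\eps\xi|>1$ (where it is bounded by $1$ and $|\xi|^{-s}<\eps^s$) gives
\[
\sup_{\xi\neq 0}\frac{|(B)|}{|\xi|^s} \leq e^{-\eps^{-\lambda}t}\,\eps^s \leq \eps^s.
\]

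Combining and dividing by $(1+t)^{s/\lambda}$ yields
\[
d_s(\tilde P_\lambda(t), \tilde P_{\lambda,\eps,reg}(t)) \leq \frac{C}{(1+t)^{s/\lambda}}\bigl(t^{s/(2\lambda)}\eps^{s/2} + \eps^s\bigr).
\]
For the first summand I use $t^{s/(2\lambda)}\leq (1+t)^{s/(2\lambda)}$, producing the desired bound $C\eps^{s/2}/(1+t)^{s/(2\lambda)}$; for the second I use $\eps\leq 1$ and $(1+t)^{-s/(2\lambda)}\leq 1$, which gives $\eps^s/(1+t)^{s/\lambda}\leq \eps^{s/2}/(1+t)^{s/(2\lambda)}$. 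This concludes the proof.

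The only potentially delicate point is to justify the scaling rule, which relies on $\tilde P_{\lambda,\eps,reg}$ being a genuine probability density; everything else is either elementary or borrowed directly from the proof of Theorem~\ref{conveps}. The reason the bound matches the $\eps^{s/2}(1+t)^{-s/(2\lambda)}$ rate of Proposition~\ref{dscon} (and not a better $\eps^s$ rate that (B) alone would provide) is that piece (A), which is the dominant contribution for moderate $\eps$, dictates the $\eps^{s/2}$ exponent.
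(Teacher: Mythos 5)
Your proof is correct and follows essentially the same route as the paper's: the same decomposition into the term already controlled by Theorem \ref{conveps}/Proposition \ref{dscon} (with $\delta=\lambda$, giving the $\eps^{s/2}t^{s/(2\lambda)}$ rate) plus the regularization correction $e^{-\eps^{-\lambda}t}(1-\widehat M_{\lambda,\eps})$, which both you and the paper bound by $C\eps^s$ via the same split at $|\eps\xi|=1$. The only cosmetic difference is that you apply the scaling rule \eqref{scala} first and estimate the unrescaled kernels, whereas the paper works directly with the scaled Fourier transforms.
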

The second lemma describes the growth of the $\alpha$-moment of
the scaled fundamental solution of the fractional diffusion
equation and of the scaled regularized fundamental solution of the
Rosenau approximation.
\begin{lem}\label{lemma4}
Let $1<\lambda <2$. For any $0<\alpha<\lambda$ and any $t > 0$
\[
m_\alpha(\tilde P_\lambda(t)) = \int_\R |v|^\alpha  \tilde P_\lambda(v, t)\,  \d v
\leq m_\alpha(L_\lambda).
 \]
Moreover, there exist positive constants $C=C(\lambda,\alpha)$ such that
 \[
 m_\alpha(\tilde P_{\lambda, \eps,reg}(t))= \int_\R
|v|^\alpha \tilde P_{\lambda, \eps,reg}(v, t)\, \d v  \leq  C.
 \]
\end{lem}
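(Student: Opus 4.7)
The plan is to eliminate the scaling first. By the change of variable $w=(1+t)^{1/\lambda}v$ in the defining formulas \eqref{fundresc}, one obtains
\[
m_\alpha(\tilde P_\lambda(t)) = \frac{m_\alpha(P_\lambda(t))}{(1+t)^{\alpha/\lambda}},\qquad m_\alpha(\tilde P_{\lambda,\eps,reg}(t)) = \frac{m_\alpha(P_{\lambda,\eps,reg}(t))}{(1+t)^{\alpha/\lambda}},
\]
so both claims reduce to estimates on the moments of the unscaled fundamental solutions.

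The first inequality is then essentially a rescaling: by \eqref{mom-plambda}, $m_\alpha(P_\lambda(t))=t^{\alpha/\lambda}m_\alpha(L_\lambda)$, and dividing by $(1+t)^{\alpha/\lambda}$ yields $m_\alpha(\tilde P_\lambda(t)) = \bigl(\tfrac{t}{1+t}\bigr)^{\alpha/\lambda}m_\alpha(L_\lambda) \le m_\alpha(L_\lambda)$.

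For the second inequality I would exploit the Wild-sum representation \eqref{espressione} to write
\[
m_\alpha(P_{\lambda,\eps,reg}(t)) = \e^{-\eps^{-\lambda}t}\sum_{n=1}^{\infty}\frac{(t\eps^{-\lambda})^n}{n!}\,m_\alpha(M_{\lambda,\eps}^{*n}) + \e^{-\eps^{-\lambda}t}\,m_\alpha(M_{\lambda,\eps}),
\]
and then estimate each piece. The Linnik density $M_\lambda$ is centered and lies in the domain of normal attraction of $L_\lambda$ (as already verified in the proof of Proposition \ref{dscon}), so Proposition \ref{522} applied to iid variables with law $M_\lambda$ gives $m_\alpha(M_\lambda^{*n}) \le C\, n^{\alpha/\lambda}$ uniformly in $n$. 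The substitution $v\mapsto \eps v$ yields $m_\alpha(M_{\lambda,\eps}^{*n}) \le C\eps^\alpha n^{\alpha/\lambda}$, while the trailing $\e^{-\eps^{-\lambda}t}\eps^\alpha m_\alpha(M_\lambda)$ is bounded by $\eps^\alpha m_\alpha(M_\lambda)$. Inserting these bounds into the expansion reduces the task to controlling the sum $\e^{-x}\sum_{n\ge 1}\frac{x^n}{n!}n^{\alpha/\lambda}$ with $x=t\eps^{-\lambda}$. Interpreting it as the expectation $\langle N^{\alpha/\lambda}\rangle$ for a Poisson variable $N$ of mean $x$, and invoking Jensen's inequality with the concave function $r\mapsto r^{\alpha/\lambda}$ (permissible since $\alpha/\lambda<1$), this expectation is at most $x^{\alpha/\lambda}=(t\eps^{-\lambda})^{\alpha/\lambda}$. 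Consequently $m_\alpha(P_{\lambda,\eps,reg}(t)) \le C t^{\alpha/\lambda} + \eps^\alpha m_\alpha(M_\lambda)$, and after dividing by $(1+t)^{\alpha/\lambda}$ one gets a constant independent of $t$.

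The delicate point is precisely the rate extracted from Proposition \ref{522}: the elementary convexity bound $m_\alpha(M_{\lambda,\eps}^{*n})\le n^{\alpha}\,m_\alpha(M_{\lambda,\eps})$ (valid for $\alpha\ge 1$) would still yield a convergent Wild sum, but the resulting $t$-power would be $t^\alpha$, which does not cancel against $(1+t)^{\alpha/\lambda}$ when $\lambda>1$ and $\alpha>\alpha/\lambda$. It is the sharp $n^{\alpha/\lambda}$ growth coming from the domain-of-normal-attraction property, combined with the Jensen trick for the Poisson distribution, that produces exactly the $t^{\alpha/\lambda}$ matching the scaling factor and gives the uniform bound.
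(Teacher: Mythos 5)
Your proposal is correct and follows essentially the same route as the paper: undo the scaling, expand $P_{\lambda,\eps,reg}$ via the Wild sum, invoke Proposition \ref{522} (Ibragimov--Linnik) to get $m_\alpha(M_\lambda^{*n})\le C\,n^{\alpha/\lambda}$, and then control the Poisson-weighted series to produce the $t^{\alpha/\lambda}$ factor that cancels the scaling. The only (harmless) difference is the last step, where you bound $\e^{-x}\sum_{n\ge1}\frac{x^n}{n!}n^{\alpha/\lambda}$ by Jensen's inequality for a Poisson variable, whereas the paper uses the elementary interpolation $(n/\tau)^\beta\le\max(1,n/\tau)$; both give the same bound $x^{\alpha/\lambda}$.
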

The third lemma deals with the evolution of the Sobolev norms of both the scaled
fundamental solution of the fractional diffusion equation and of the scaled
regularized fundamental solution of the Rosenau approximation.
\begin{lem}\label{lemma5}
Let $1<\lambda <2$. For any $0<s<(\lambda-1)/2$, any $0<\beta<{1
/\lambda}$ and for $t$ large enough  there are positive constants
$C_1=C_1(\lambda,s)$ and $C_2= C_2(\lambda,\beta,s)$ such that
\[
\begin{aligned}
&\left\| \tilde P_\lambda (t)
\right \|_{\dot H^s} \leq C_1, \\
& \left\| \tilde P_{\lambda, \eps, reg}(t) \right \|_{\dot H^s}
\leq C_2\frac 1 {\eps^{s+{1 /2}}}(1+t)^{(s+
1/2)\left({1 /\lambda} -\beta\right )}.
\end{aligned}
\]
\end{lem}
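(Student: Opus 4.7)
The plan is to reduce the first bound to a direct frequency-space computation and to attack the second one via the Wild sum \eqref{espressione}, which transforms the problem into bounding a Poisson-weighted sum of $\dot H^s$-norms of iterated convolutions of the Linnik density.

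For the first bound, using $\widehat{P_\lambda}(\xi,t) = e^{-|\xi|^\lambda t}$ and the scaling \eqref{fundresc} yields $\widehat{\tilde P_\lambda}(\xi,t) = \exp\!\left(-\tfrac{t}{1+t}|\xi|^\lambda\right)$. For $t\geq 1$ the coefficient in the exponent is at least $1/2$, so
\[
\|\tilde P_\lambda(t)\|_{\dot H^s}^2 \leq \int_\R |\xi|^{2s}\,e^{-|\xi|^\lambda}\,\d\xi,
\]
which is finite because $s>0$ provides integrability at the origin and the stretched exponential provides decay at infinity. This gives $C_1$.

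For the second bound I use the Wild representation $P_{\lambda,\eps,reg} = e^{-a} M_{\lambda,\eps} + e^{-a}\sum_{n\geq 1}(a^n/n!)\,M_{\lambda,\eps}^{*n}$ with $a := t/\eps^\lambda$. By Plancherel and the explicit form $\widehat M_{\lambda,\eps}(\xi) = (1+|\eps\xi|^\lambda)^{-1}$, the substitutions $\eta = \eps\xi$ and then $u = |\eta|^\lambda$ turn $\|M_{\lambda,\eps}^{*n}\|_{\dot H^s}^2$ into a Beta integral:
\[
\|M_{\lambda,\eps}^{*n}\|_{\dot H^s}^2 = \frac{2}{\lambda}\,\eps^{-(2s+1)}\,\frac{\Gamma(\tfrac{2s+1}{\lambda})\,\Gamma(2n-\tfrac{2s+1}{\lambda})}{\Gamma(2n)}.
\]
The hypothesis $s<(\lambda-1)/2$ ensures convergence of this integral already at $n=1$. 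Stirling then gives $\|M_{\lambda,\eps}^{*n}\|_{\dot H^s}\leq C\,\eps^{-(s+1/2)}\,n^{-(s+1/2)/\lambda}$, and a fortiori, weakening the exponent, $\|M_{\lambda,\eps}^{*n}\|_{\dot H^s}\leq C_\beta\,\eps^{-(s+1/2)}\,n^{-(s+1/2)\beta}$ for any $\beta\in(0,1/\lambda)$.

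The technical heart is to control the Poisson-weighted sum
\[
S_\beta(a) := e^{-a}\sum_{n=1}^\infty \frac{a^n}{n!}\,n^{-(s+1/2)\beta},
\]
which is nothing but $\mathbb{E}[N^{-(s+1/2)\beta}\mathbf{1}_{\{N\geq 1\}}]$ for $N\sim\mathrm{Poisson}(a)$. Splitting at $n=a/2$, the tail $\{N>a/2\}$ contributes at most $(a/2)^{-(s+1/2)\beta}$, while $\mathbb{P}(N\leq a/2)\leq e^{-ca}$ by a standard Chernoff bound, which decays exponentially and is absorbed into the constant. Hence $S_\beta(a)\leq C\,a^{-(s+1/2)\beta}$ for $a\geq 1$. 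Feeding this back yields
\[
\|P_{\lambda,\eps,reg}(t)\|_{\dot H^s}\leq C\,\eps^{-(s+1/2)}\,a^{-(s+1/2)\beta} = C\,\eps^{-(s+1/2)(1-\lambda\beta)}\,t^{-(s+1/2)\beta}.
\]
The scaling identity $\|\tilde P_{\lambda,\eps,reg}(t)\|_{\dot H^s} = (1+t)^{(s+1/2)/\lambda}\|P_{\lambda,\eps,reg}(t)\|_{\dot H^s}$ then produces, for $t$ large enough so that $a\geq 1$, the factor $(1+t)^{(s+1/2)(1/\lambda-\beta)}$; using $\eps<1$ to upgrade $\eps^{-(s+1/2)(1-\lambda\beta)}$ to $\eps^{-(s+1/2)}$ completes the desired inequality, while the separate contribution $e^{-a}\|M_{\lambda,\eps}\|_{\dot H^s}$ is exponentially small in $a$ and easily absorbed. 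The main obstacle I anticipate is exactly the Poisson-sum estimate, since the Beta/Stirling calculation is a careful but routine piece of bookkeeping, whereas recovering the precise exponent in $S_\beta(a)$ hinges on the correct concentration argument for the Poisson distribution around its mean.
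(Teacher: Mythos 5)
Your proof is correct, and for the second estimate it takes a genuinely different route from the paper. For $\|\tilde P_\lambda(t)\|_{\dot H^s}$ both arguments are essentially the same scaling computation on $\e^{-|\xi|^\lambda t}$. For $\tilde P_{\lambda,\eps,reg}$, the paper works directly with the closed Fourier expression \eqref{p_epslambda-fur}: after isolating the term $B$ coming from $\e^{-\ve^{-\lambda}t}\widehat M_{\lambda,\eps}$, it sets $A=(1+t)^{(2s+1)/\lambda}I_\eps(t)$ and derives a differential inequality $I_\eps'=-2\eps^{-\lambda}I_\eps+2\eps^{-\lambda}A_\eps(t)$, controlling $A_\eps(t)$ by splitting frequencies at $|\eta|=(1+t)^{-\beta}$ and then integrating the ODE. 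You instead go back to the Wild expansion \eqref{espressione}, compute $\|M_{\lambda,\eps}^{*n}\|_{\dot H^s}$ exactly as a Beta integral (this is where the explicit Linnik form enters for you, just as it enters the paper's estimate of $\sup\exp(-2t\eps^{-\lambda}(1-\widehat M_{\lambda,\eps}))$), and reduce the whole problem to the Poisson moment $\mathbb{E}[N^{-\gamma}\mathbf 1_{\{N\geq 1\}}]$, handled by a lower-tail Chernoff bound; the auxiliary parameter $\beta$ appears for you only as an artificial weakening $n^{-(s+1/2)/\lambda}\leq n^{-(s+1/2)\beta}$, whereas in the paper it is a genuine frequency cut-off. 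Your route is more self-contained and probabilistic, avoids the ODE and the integration by parts in \eqref{stima2}, and in fact only requires $s<\lambda-1/2$ for the convolution powers (since $\widehat M_{\lambda,\eps}$ appears squared), while the paper's constant $C_{\lambda,s}=\int|\eta|^{2s}\widehat M_\lambda(\eta)\,\d\eta$ forces $s<(\lambda-1)/2$ already at this stage; the paper's decomposition, on the other hand, keeps the computation in the form \eqref{p_epslambda-fur} that is used throughout Section \ref{main-sect}. Two routine points you should make explicit if you write this up: the term-by-term triangle inequality for the infinite Wild sum in $\dot H^s$ (immediate here by Minkowski, since the Fourier transforms $(1+|\eps\xi|^\lambda)^{-n}$ are nonnegative), and the requirement $t\geq\max(1,\eps^\lambda)$ so that $a\geq 1$, which is exactly the ``$t$ large enough'' of the statement.
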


\bigskip

With these results at hand, we can prove Theorem \ref{main}.

\noindent {\bf Proof of Theorem \ref{main}.} The proof of the theorem can be
divided into different steps. Recalling that $\|g_0\|_{L^1} =1$, $g(t)=P_\lambda(t)
\ast g_0$ and $g_{\eps,reg}(t)=P_{\lambda, \eps,reg}(t)\ast g_0$, we get for $t > 0$
$$
\left\| g(t)-g_{\eps,reg}(t)\right\|_{L^1}\leq \left\|
P_\lambda(t)-P_{\lambda, \eps,reg}(t)\right\|_{L^1} .
$$
Thanks to the invariance by scaling of the $L^1$ norm we get
$$
\left\| P_\lambda (t)-P_{\lambda,
\eps,reg}(t)\right\|_{L^1}=\left\| \tilde P_\lambda(t)-\tilde
P_{\lambda, \eps,reg}(t)\right\|_{L^1},
$$
with $\tilde P_\lambda$ and $\tilde P_{\lambda, \eps,reg}$ defined in
\eqref{fundresc}. By using the  interpolation inequalities  \fer{ine!}, for $t>0$  and
$\alpha, s \in (0,\lambda)$ we obtain
 \be \label{bound}
\begin{aligned}
\left\| \tilde P_\lambda(t)-\tilde P_{\lambda, \eps,reg}(t)
\right\|_{L^1} &\leq C \, \left\| \tilde P_\lambda(t)-\tilde
P_{\lambda, \eps,reg}(t)\right\|_{L^2}^{\frac
{2\alpha}{1+2\alpha}}
\left[m_\alpha(\tilde P_\lambda(t))+m_\alpha(\tilde P_{\lambda, \eps,reg}(t))\right]^{\frac 1{1+2\alpha}}\\
& \leq C\, d_s\left(\tilde P_\lambda (t),\tilde
P_{\lambda, \eps,reg}(t)\right)^{\frac{4s\alpha}{(1+4s)(1+2\alpha)}}\left\|\tilde
P_\lambda (t)-\tilde
P_{\lambda, \eps,reg}(t)\right\|_{\dot H^s}^{\frac{2\alpha(1+2s)}{(1+4s)(1+2\alpha)}}\\
&\quad\quad\quad\quad\quad\quad
 \left[m_\alpha(\tilde P_\lambda(t))+m_\alpha(\tilde P_{\lambda, \eps,reg}(t))\right]^{\frac 1{1+2\alpha}}.\\
\end{aligned}
\ee
 Thanks to estimate \eqref{bound}, the bounds in Lemmas \ref{lemma3}, \ref{lemma4} and \ref{lemma5}
 imply
\[
\begin{aligned}
&\| \tilde P_\lambda (t) -\tilde P_{\lambda, \eps, reg}(t)\|_{L^1} \\
&\leq C \left[ \frac {\eps^{\frac s2}} {(1+t)^{\frac s
{2\lambda}}} \right ]^{\frac {4s\alpha}{(1+4s)(1+2\alpha)}} \left[
m_\alpha (L_\lambda) +C \right ]^{\frac 1{1+2\alpha}} \left[ C_1
+C_2
 \frac 1{\eps^{s+{1 /2}}}(1+t)^{(s+{1 /2})({1 /\lambda} -\beta)}\right ]^{\frac {2\alpha(1+2s)}{(1+4s)(1+2\alpha)}}\\
&\leq C \eps^{\gamma(s,\alpha)} (1+t)^{\delta(s,\alpha)}
\end{aligned}
\]
where $C>0$ depends on $\lambda$, $s$ and $\beta$, $\gamma(s,\alpha)=\frac{4s^2\alpha-
2\alpha(1+2s)^2}{2(1+4s)(1+2\alpha)}$ and  $\delta(s,\alpha)=\frac {-4s^2\alpha
+2\alpha(1+2s)^2(1-\lambda \beta)}{2\lambda(1+4s)(1+2\alpha)}$. By choosing $1-\lambda
\beta$ small enough, we get  $\delta(s,\alpha)<0$. Hence the result follows.

\subsection{Proofs of the Lemmas}

{\bf Proof of Lemma \ref{lemma3}.} We have
\[
\begin{aligned}
& d_s(\tilde P_\lambda (t),\tilde
P_{\lambda,\eps,reg}(t))\\
&=\sup_{\xi\neq 0}
\frac 1{|\xi|^s}
{\left|
{\rm e}^{-|\xi|^\lambda \frac t{1+t}}-
\left[
{\rm e}^{-\eps^{-\lambda}{t}\left(1-\widehat M_{\lambda,\eps}\left(\frac \xi {(1+t)^{1/ \lambda}}\right)\right)}
-{\rm e}^{-\ve^{-\lambda} t }\left(1-\widehat M_{\lambda,\eps}
\left(\frac \xi {(1+t)^{{1/\lambda}}}\right)\right)
\right]
\right|}\\
 & \leq \sup_{\xi\neq 0}\frac 1{|\xi|^s}{\left|  {\rm
e}^{-|\xi|^\lambda \frac t{1+t}}-{\rm e}^{-\eps^{-\lambda}{t}\left(1-\widehat
M_{\lambda,\eps}\left(\frac \xi {(1+t)^{{1/\lambda}}}\right)\right)}\right|}+\sup_{\xi\neq
0}\frac 1{|\xi|^s}{\left|{\rm e}^{-\ve^{-\lambda} t}\left(1-\widehat
M_{\lambda,\eps}
\left(\frac {\xi}{(1+t)^{{1/\lambda}}}\right)\right)\right|} \\
&=I+II.
\end{aligned}
\]
The  term $II$ can be written in the form
\[
 II=\sup_{\xi\neq
0}\frac{\left|{\rm e}^{-\ve^{-\lambda} t}\left(1-\widehat
M_{\lambda,\eps}
\left(\frac{\xi}{(1+t)^{{1/\lambda}}}\right)\right)\right|}{|\xi|^s}=\frac{{\rm
e}^{-\ve^{-\lambda} t}}{(1+t)^{{s /\lambda}}}\sup_{\xi\neq
0}\frac{\left|1-\widehat M_{\lambda,\eps}(\xi)\right|}{|\xi|^s}.
 \]
Therefore, since $1 < \lambda < 2$,  and  $0<s<\lambda$  for $\xi \neq 0$ we get
 \[
 \begin{aligned}
\frac{\left|1-\widehat M_{\lambda,\eps}(\xi)\right|}{|\xi|^s} &= \frac {
\eps^\lambda|\xi|^\lambda}{|\xi|^s (1+\eps^\lambda |\xi|^\lambda)} \leq \eps^\lambda
|\xi|^{\lambda -s}.
\end{aligned}
\]
Thus,  for any given positive constant $R>0$
\[
\sup_{\xi\neq 0}\frac{\left|1-\widehat M_{\lambda,\eps}(\xi)\right|}{|\xi|^s}
\leq \frac 1{R^s} + \sup_{0<|\xi|\leq R}\frac{\left|1-\widehat M_{\lambda,\eps}(\xi)\right|}{|\xi|^s} \leq \frac 1{R^s} + \eps^\lambda R^{\lambda -s}.
\]
By choosing $R=1/\eps$ we get
\[
II\leq 2 \eps^s.
\]
Finally, we get the bound
\[
 II\leq C \eps^s\frac {\e^{-\ve^{-\lambda} t}}{(1+t)^{{s /\lambda}}}.
\]
The first term coincides with the term estimated in Proposition \ref{dscon}. For this
term  we proved the bound
$$
I\leq C
\varepsilon^{s/2}\frac{t^{s/{(2\lambda)}}}{(1+t)^{s/\lambda}}.
$$
This is enough to conclude.
%%%%%%%%%%%%%%%%%%%%%%%%%%%%%%%%%%%%%%%%%%%%%%%%%%%%%%%%%%%%%%%%%%%%%%%%%%%%%%%%%%%%%%%%%%%%%%%%%
%%%% lemma momenti
\bigskip

Let us now estimate the moments.

\noindent {\bf Proof of Lemma \ref{lemma4}.} Let us remind that
\[
\tilde P_\lambda(v,t) =(1+t)^{{1 /\lambda}}
P_\lambda((1+t)^{{1 /\lambda}}v,t)
\]
and that  in \eqref{mom-plambda} we stated that
\[
\int_\R |v|^\alpha   P_\lambda(v, t) \d v=   t^{ \alpha/\lambda} m_\alpha(L_\lambda).
\]
Hence we obtain
\[
\begin{aligned}
\int_\R |v|^\alpha   \tilde P_\lambda(v, t) \d v &=   \int_\R |v|^\alpha  (1+t)^{{1 /\lambda}}
P_\lambda((1+t)^{{1 /\lambda}}v,t) \, \d v= \frac 1{(1+t)^{{\alpha /\lambda}}} \int_\R   |v|^\alpha   P_\lambda(v, t) \d v \leq m_\alpha(L_\lambda).
\end{aligned}
\]
To estimate $m_\alpha(\tilde P_{\lambda, \eps,reg}(t))$, we remark that, thanks to
expression \eqref{espressione}, it is enough to prove that
\[
\int_\R |v|^\alpha \left( {\rm e}^{-\ve^{-\lambda} t}\sum_{n=1}^{+\infty}\left(\frac
t{\eps^\lambda}\right)^n\frac 1{n!}\
M^{*n}_{\lambda,\eps} (v)\right ) \, \d v \leq  t^{{\alpha /\lambda}}C.
\]
%Let us begin by remark that thanks to the exponential form of $M_\lambda$, we get
%\[
%\widehat {M^{*n}_{\lambda,\eps} }(\xi)= \e^{- \left| n^{{1 /\lambda} } \eps \xi\right |^\lambda}
%\]
%which reads in the physical variable
%\[
%M^{*n}_{\lambda,\eps} (v)= \frac 1{n^{{1 /\lambda}} \eps} M_{\lambda}\left( \frac v {n^{{1 /\lambda}} \eps}\right ).
%\]
By definition
\[
 M^{*n}_{\lambda,\eps} (v) = \eps^{-1}  M^{*n}_{\lambda} (\eps^{-1}v),
 \]
and this implies
 \[
 \int_\R |v|^\alpha M^{*n}_{\lambda,\eps} (v)\, \d v = \eps^\alpha \int_\R |v|^\alpha M^{*n}_{\lambda} (v)\, \d v.
\]
Now, consider that $n^{1/\lambda}M^{*n}_{\lambda} (n^{1/\lambda}v)$ is the density of
 the sum ${(X_1+\dots+X_n )/{n^{1/\lambda}}}$, where
$(X_n)_{n\geq 1}$ is a
 sequence of
independent and identically distributed real-valued random variables with common
density $M_\lambda$ in the domain of normal attraction of $L_\lambda (v)\d v$. This
fact can be used to write
\[
  \int_\R |v|^\alpha M^{*n}_{\lambda} (v)\, \d v=   n^{\alpha/\lambda} \int_\R |w|^\alpha n^{1/\lambda} M^{*n}_{\lambda} (n^{{1 /\lambda}}w) \, \d w.
 \]
Since $M_\lambda$ is a centered distribution, ${(X_1+\dots+X_n )/{n^{{1 /\lambda}}}}$
converges in law to $L_\lambda(v) \d v$. Therefore,   thanks to Lemma 5.2.2 in
\cite{Ibra71} (see Proposition \ref{522}) one obtains
\[
\int_\R |w|^\alpha n^{{1 /\lambda}} M^{*n}_{\lambda} (n^{{1 /\lambda}}w) \, \d w
\leq C,
\]
and this implies
\[
 \int_\R |v|^\alpha
M^{*n}_{\lambda,\eps} (v) \, \d v \leq C\eps ^\alpha
n^{ \alpha/\lambda} .
\]
In conclusion
\[
\begin{aligned}
\int_\R |v|^\alpha \left( {\rm e}^{-\ve^{-\lambda} t}\sum_{n=1}^{+\infty}\left(\frac
t{\eps^\lambda}\right)^n\frac 1{n!}\
M^{*n}_{\lambda,\eps} (v)\right ) \, \d v &=     {\rm e}^{-\ve^{-\lambda} t}\sum_{n=1}^{+\infty}\left(\frac
t{\eps^\lambda}\right)^n\frac 1{n!}
\int_\R |v|^\alpha\
M^{*n}_{\lambda,\eps} (v) \, \d v \\
&\leq   C \eps ^\alpha \ {\rm e}^{-\ve^{-\lambda} t}\sum_{n=1}^{+\infty}\left(\frac
t{\eps^\lambda}\right)^n\frac 1{n!}
n^{{\alpha /\lambda}} .
\end{aligned}
\]
We will now prove that
\[
{\rm e}^{-\ve^{-\lambda} t}\sum_{n=1}^{+\infty}\left(\frac
t{\eps^\lambda}\right)^n\frac 1{n!} n^{{\alpha /\lambda}}  \leq \frac {t^{\alpha /\lambda}}{\eps ^\alpha},
\]
or, in an equivalent way
\[
\sum_{n=1}^{+\infty}\left(\frac
t{\eps^\lambda}\right)^n\frac 1{n!}
\left(\frac {n}{t}\right )^{{\alpha /\lambda}}  \leq \frac {{\rm e}^{\ve^{-\lambda} t}}{\eps^\alpha}.
\]
If  $ \tau= \ve^{-\lambda} t, $ and $\beta= {\alpha /\lambda}$, this amounts to prove
that, if $0<\beta<1$
\[
\sum_{n=1}^{+\infty}\frac {\tau^n}{n!} \left(\frac {n}{\tau}\right )^{\beta}  \leq
{\rm e}^{\tau}.
\]
To this end, it is enough to remark that, if $\beta=0$
\[
\sum_{n=1}^{+\infty}\frac {\tau^n}{n!}  =  {\rm e}^{\tau}-1.
\]
On the other side, if  $\beta=1$ we get
\[
\sum_{n=1}^{+\infty}\frac {\tau^n}{n!} \left(\frac {n}{\tau}\right )=
\sum_{n=0}^{+\infty}\frac {\tau^n}{n!} = {\rm e}^{\tau}.
\]
Hence, if $0< \beta <1$
\[
\sum_{n=1}^{+\infty}\frac {\tau^n}{n!}
\left(\frac {n}{\tau}\right )^{\beta}  \leq \sum_{n=1}^{+\infty}\frac {\tau^n}{n!}
\max \left(1, \frac {n}{\tau}\right ) \leq \max \left(  \sum_{n=1}^{+\infty}\frac {\tau^n}{n!} ,  \sum_{n=0}^{+\infty}\frac {\tau^n}{n!}     \right )  = {\rm e}^{\tau}.
\]
All these bounds imply
\[
 \int_\R |v|^\alpha  \tilde P_{\lambda, \eps,reg}(v, t)\, \d v  \leq \frac{t^{{\alpha /\lambda}}}{(1+t)^{{\alpha /\lambda}}}C  \leq C.
\]

\begin{rem}\label{rem7} A direct consequence of this Lemma is that the moment
of order $\alpha$ of the approximated solution $g_\varepsilon (t)$ of equation
\eqref{approx}, grows at a polynomial rate in time in correspondence to any  generic
centered distribution $M_\lambda (v)\d v$ belonging to the domain of normal attraction
of the stable law $L_\lambda(v) \d v$
\[
\int_\R  |v|^\alpha g_\eps(v,t) \d v \leq C t^{{\alpha /\lambda}}.
\]
\end{rem}

%%%%%%%% Lemma reg Sobolev

\noindent{\bf Proof of Lemma \ref{lemma5}.} By definition
\[
\tilde P_\lambda (v,t) = \left( \frac {1+t}{t}\right )^{{1
/\lambda}}L_\lambda \left( \left(\frac {1+t}{t}\right )^{{1
/\lambda}}v\right ).
\]
Therefore, for $t >0$ large enough and $s>0$
\[
 \left\|
\tilde P_\lambda (t) \right \|_{\dot H^s} = \left( \frac
{1+t}{t}\right )^{\left( s+{1 /2}\right )/\lambda } \left\|
L_\lambda\right \|_{\dot H^s} \leq C_1.
\]
Expressions \eqref{nucleoreg} and \eqref{p_epslambda-fur} lead to
\[
{\cal F}\left(\tilde P_{\lambda, \eps, reg}\right )(\xi, t) =
\e^{-\ve^{-\lambda} t} \left[ \exp\left( \ve^{-\lambda}\  t
\widehat M_{\lambda, \eps} \left(
 \frac \xi{(1+t)^{{1 /\lambda}}}
 \right )
\right ) -1+\widehat M_{\eps,\lambda} \left( \frac \xi{(1+t)^{{1 /\lambda}}} \right
) \right ].
\]
Therefore
  \be\label{sob}
\begin{aligned}
\left\| \tilde P_{\lambda, \eps, reg}( t) \right\|_{\dot
H^s}^2\leq & C \int_\R |\xi|^{2s}
\exp\left(-{2t\eps^{-\lambda}}\right ) \left( \exp\left(
\ve^{-\lambda} t\ \widehat M_{\lambda, \eps} \left( \frac
\xi{(1+t)^{ 1 /\lambda}} \right ) \right ) -1
\right )^2\, \d \xi +\\
&  \int_\R |\xi|^{2s} \exp\left(-{2t\eps^{-\lambda}}\right )
\widehat M_{\lambda, \eps}
\left(
\frac \xi{(1+t)^{{1 /\lambda}}}
\right )^2\, \d \xi = A+B.
\end{aligned}
\ee
 By a change of variable, term  $B$ can be estimated as follows
\be\label{stima4}
\begin{aligned}
B &= (1+t)^{{(2s+1)/\lambda}}\exp\left(-{2t\eps^{-\lambda}}\right
) \left\| M_{\lambda,\eps}\right \|_{\dot H^s}^2 =
(1+t)^{{(2s+1)/\lambda}}\frac{\exp\left(-{2t\eps^{-\lambda}}\right
)}{\eps^{2s+1}} \left\| M_{\lambda}\right \|_{\dot H^s}^2\\
&\leq C_{\lambda, s}
(1+t)^{{(2s+1)/\lambda}}\frac{\exp\left(-{2t\eps^{-\lambda}}\right
)}{\eps^{2s+1}}
\end{aligned}
  \ee
  for $s< \lambda -1/2$.

Likewise, the term $A$ can be expressed as
\[
A=  (1+t)^{{(2s+1)/\lambda}}
\int_\R | \eta |^{2s}
\left(
\exp\left(
-{t\eps^{-\lambda}} \left(1-\widehat M_{\lambda, \eps} (\eta)\right )
\right )
-\exp\left(-{t\eps^{-\lambda}}\right )
\right )^2\, \d \eta = (1+t)^{{(2s+1)/\lambda}}   I_\eps(t).
\]
To estimate the term $I_\eps(t)$, consider that it satisfies the differential equation
 \[
\frac {\d I_\eps(t)}{\d t} = -{2\eps^{-\lambda}} I_\eps(t) + {2\eps^{-\lambda}}
A_\eps(t),
 \]
where $A_\eps(t)$ can bounded in a precise way. In fact
 \[
\begin{aligned}
&\frac {\d I_\eps(t)}{\d t}
=
\int_\R | \eta |^{2s}  2
\left(
\exp\left(
-{t\eps^{-\lambda}} \left(1-\widehat M_{\lambda, \eps} (\eta)\right )
\right )
-\exp\left(-{t\eps^{-\lambda}}\right )
\right )\times\\
&  \qquad \qquad \qquad\left(
-\exp\left(
-{t\eps^{-\lambda}} \left(1-\widehat M_{\lambda, \eps} (\eta)\right )
\right )
\frac{1-\widehat M_{\lambda, \eps} (\eta)}{\eps^\lambda}  + \frac 1{\eps^\lambda}\exp\left(-{t\eps^{-\lambda}}\right )
\right )
\, \d \eta\\
=& -{2\eps^{-\lambda}}
\int_\R | \eta |^{2s}
\left(
\exp\left(
-{t\eps^{-\lambda}} \left(1-\widehat M_{\lambda, \eps} (\eta)\right )
\right )
-\exp\left(-{t\eps^{-\lambda}}\right )
\right )^2 \, \d \eta + \\
& {2\eps^{-\lambda}}
\int_\R | \eta |^{2s}
\left(
\exp\left(
-{t\eps^{-\lambda}} \left(1-\widehat M_{\lambda, \eps} (\eta)\right )
\right )
-\exp\left(-{t\eps^{-\lambda}}\right )
\right )
\widehat M_{\lambda, \eps} (\eta)\
\exp\left(-{t\eps^{-\lambda}}
 \left(1-\widehat M_{\lambda, \eps} (\eta)\right )
\right )
\, \d \eta\\
& = -{2\eps^{-\lambda}} I_\eps(t) + {2\eps^{-\lambda}} A_\eps(t).
\end{aligned}
\]
It remains to estimate the  term $A_\eps(t)$. For a given $\beta >0$, we split the
$A_\eps(t)$ term into two terms
\[
\begin{aligned}
A_\eps(t) = &\int_\R | \eta |^{2s} \widehat M_{\lambda, \eps} (\eta)
\exp\left(-{2t\eps^{-\lambda}}
 \left(1-\widehat M_{\lambda, \eps} (\eta)\right )\right )
\left(
1-
\exp\left(
-{t\eps^{-\lambda}} \widehat M_{\lambda, \eps} (\eta)
\right )
\right )
\, \d \eta \\
= &
\int_{0\leq |\eta| \leq \frac 1{(1+t)^\beta}}
| \eta |^{2s} \widehat M_{\lambda, \eps} (\eta)
\exp\left(-{2t\eps^{-\lambda}}
 \left(1-\widehat M_{\lambda, \eps} (\eta)\right )\right )
\left(
1-
\exp\left(
-{t\eps^{-\lambda}} \widehat M_{\lambda, \eps} (\eta)
\right )
\right )
\, \d \eta +\\
& \int_{{|\eta| > \frac 1{(1+t)^\beta}}}
 | \eta |^{2s} \widehat M_{\lambda, \eps} (\eta)
\exp\left(-{2t\eps^{-\lambda}}
 \left(1-\widehat M_{\lambda, \eps} (\eta)\right )\right )
\left(
1-
\exp\left(
-{t\eps^{-\lambda}} \widehat M_{\lambda, \eps} (\eta)
\right )
\right )
\, \d \eta \\
= & A_1+ A_2.
\end{aligned}
\]
Since $0< \widehat M_{\lambda, \eps}(\eta) \leq 1$, we have
\[
A_1 \leq   \int_{0\leq |\eta| \leq \frac 1{(1+t)^\beta}} | \eta |^{2s} \, \d \eta  =
\frac 2{2s+1}\ \frac 1{(1+t)^{\beta(2s+1)}} = \frac {C_s}{(1+t)^{\beta(2s+1)}}.
\]
Moreover
\[
\begin{aligned}
A_2&= \int_{{|\eta| > \frac 1{(1+t)^\beta}}}
 | \eta |^{2s} \widehat M_{\lambda, \eps} (\eta)
\exp\left(-{2t\eps^{-\lambda}}
 \left(1-\widehat M_{\lambda, \eps} (\eta)\right )\right )
\left(
1-
\exp\left(
-{t\eps^{-\lambda}} \widehat M_{\lambda, \eps} (\eta)
\right )
\right )
\, \d \eta
\\
& \leq  \sup_{|\eta| > \frac 1{(1+t)^\beta}}
 \exp\left(-{2t\eps^{-\lambda}}  \left(1-\widehat M_{\lambda, \eps} (\eta)\right )\right )
 \int_{{|\eta| > \frac 1{(1+t)^\beta}}}  | \eta |^{2s} \widehat M_{\lambda, \eps} (\eta)\, \d \eta\\
 &
 = C_{\lambda,s}  \frac 1{\eps^{2s+1}} \sup_{|\eta| > \frac 1{(1+t)^\beta}}
 \exp\left(-{2t\eps^{-\lambda}}  \left(1-\widehat M_{\lambda, \eps} (\eta)\right )\right ),
 \end{aligned}
\]
where $C_{\lambda,s}= \int_\R| \eta|^{2s} \widehat M_\lambda
(\eta)\, \d \eta <+\infty$ for $s<(\lambda -1)/2$. Now, since
$\displaystyle{\widehat M_{\lambda, \eps} (\eta) =\frac 1{1+|\eps
\xi|^\lambda}}$  we obtain
\[
\sup_{|\eta| > \frac 1{(1+t)^\beta}}
 \exp\left(-{2t\eps^{-\lambda}}
 \left(1-\widehat M_{\lambda, \eps} (\eta)\right )\right )
  = \sup_{|\eta| > \frac 1{(1+t)^\beta}}
\exp\left(-\frac {2t |\eta|^\lambda}{1+\eps^\lambda |\eta| ^\lambda}\right ).
  \]
Since the function $\exp\left(-\frac {2t |\eta|^\lambda}{1+\eps^\lambda |\eta| ^\lambda}\right ) $ is decreasing in $\eta$, we get
 \[
 \sup_{|\eta| > \frac 1{(1+t)^\beta}}
\exp\left(-\frac {2t |\eta|^\lambda}{1+\eps^\lambda |\eta |^\lambda}\right )
 = \exp\left(- 2t\, \frac { \frac 1{(1+t)^{\beta\lambda}}}{1+\eps^\lambda  \frac 1{(1+t)^{\beta\lambda}}}\right )= \exp\left(- \frac{2t}{\eps^\lambda + (1+t)^{\beta\lambda}}\right ).
    \]
If $0<\beta<{1 /\lambda}$, we conclude that there exists a  constant $C_\beta >0$
such that
\[
 \exp\left(- \frac{2t}{\eps^\lambda + (1+t)^{\beta\lambda}}\right) \leq \e^{-C_\beta t^{1-\beta\lambda}}.
 \]
In the end, for $t > 0$ and $0<\beta<{1 /\lambda}$ there exists a constant
$C= C(\lambda,s,\beta)>0$ such that
 \be\label{A(t)} A_\eps(t) \leq  \left( \frac {C_s}{(1+t)^{\beta(2s+1)}} +
\frac{C_{\lambda, s}}{\eps^{2s+1}}\e^{-C_\beta t^{1-\beta\lambda}}\right ) \leq   \frac
{C} {\eps^{2s+1}}\frac {1}{(1+t)^{\beta(2s+1)}}.
 \ee
 With this estimate we can study the differential relation
\[
\frac{\d I_\eps(t)}{\d t}= -{2\eps^{-\lambda}} I_\eps(t) + {2\eps^{-\lambda}} A_\eps(t).
\]
For $a>0$ to be suitably chosen, we get
\[
I_\eps(t) \e^{2{\eps^{-\lambda}} t} -
I_\eps(a\eps^\lambda)\e^{2a} = {2\eps^{-\lambda}} \int_{a\eps^\lambda}^t  A_\eps(\sigma) \e^{{2\eps^{-\lambda}}\sigma}\, \d \sigma.
\]
Since, for $0<s<\lambda -1/2$,
\be \label{Iepsa}
\begin{aligned}
 I_\eps(a\eps^\lambda) &= \int_\R |
 \eta |^{2s}  \left( \e^{-a \left(1-\widehat M_{\lambda, \eps} (\eta)\right )}
-\e^{-a}
\right )^2\, \d \eta = \frac {\e^{-2a}}{\eps^{2s+1}} \int_\R |\xi|^{2s} \left( \e^{\frac a{1+|\xi|^\lambda}}-1\right)^2\, \d \xi \\
&= C_{s,a} \frac 1{\eps^{2s+1}} <+\infty,
\end{aligned}
\ee
 for $t\geq a \geq a\eps^\lambda$ we get
 \be\label{stima} I_\eps(t)= I_\eps(a\eps^\lambda)\e^{-{2\eps^{-\lambda}}(t-a\eps^\lambda)} + {2\eps^{-\lambda}}  \int_{a\eps^\lambda}^t
A_\eps(\sigma) \e^{-{2\eps^{-\lambda}}(t-\sigma)}\, \d \sigma.
 \ee
 Thanks to
estimate \eqref{A(t)} we get
 \be\label{stima2} \int_{a\eps^\lambda}^t  A_\eps(\sigma)
\e^{-{2\eps^{-\lambda}}(t-\sigma)}\, \d \sigma \leq  \frac {C}
{\eps^{2s+1}} \int_{a\eps^\lambda}^t \frac { \e^{-{2\eps^{-\lambda}}(t-\sigma)}}
{\sigma^{\beta(2s+1)}} \, \d \sigma.
 \ee
 Integrating by parts
\[
\begin{aligned}
\int_{a\eps^\lambda}^t
\frac { \e^{-{2\eps^{-\lambda}}(t-\sigma)}}
{\sigma^{\beta(2s+1)}}
\, \d \sigma
& = \frac {\eps^\lambda}2 \frac 1{t^{\beta(2s+1)}} -\frac {\eps^\lambda}2 \frac { \e^{-2  {\eps^{-\lambda}}(t-{a\eps^\lambda})}}{{a}^{\beta(2s+1)}\eps^{\lambda\beta(2s+1)}}
+ \frac{\eps^\lambda} 2 {\beta (2s+1)}
\int_{a\eps^\lambda}^t
\frac { \e^{-{2\eps^{-\lambda}}(t-\sigma)}}
{\sigma^{\beta(2s+1)+1}}
\, \d \sigma \\
&\leq
\frac {\eps^\lambda}2 \frac 1{t^{\beta(2s+1)}} +
\frac {\eps^\lambda}2\frac {\beta (2s+1)}  {a\eps^\lambda}
\int_{a\eps^\lambda}^t
\frac{ \e^{-{2\eps^{-\lambda}}(t-\sigma)}}
{\sigma^{\beta(2s+1)}}
\, \d \sigma.
\end{aligned}
\]
Let us choose $a$ such that $1-\frac {\beta (2s+1)}  {2a}  >0$ ($a$ depends on $s$ and
$\beta$). We obtain
\[
\int_{a\eps^\lambda}^t
\frac { \e^{-{2\eps^{-\lambda}}(t-\sigma)}}
{\sigma^{\beta(2s+1)}}
\, \d \sigma
\leq C_{s,\beta} \frac {\eps^\lambda}{t^{\beta(2s+1)}}.
\]
Getting back to \eqref{Iepsa}, \eqref{stima} and \eqref{stima2}, we proved that there
exists a constant  $ C=C(\lambda, s,\beta) >0$ such that
 \be\label{stima3} I_\eps(t) \leq
{C}\frac 1{ \eps^{2s+1}} \left( \e^{-{2t\eps^{-\lambda}}} +\frac 1
{t^{\beta(2s+1)}}\right ).
 \ee
 Finally, by \eqref{sob}, \eqref{stima4} and \eqref{stima3},
we get
\[
\left\| \tilde P_{\lambda, \eps, reg}( t) \right\|_{\dot H^s}^2\leq  C \frac {(1+t)^{{(2s+1)/\lambda}}}{\eps^{2s+1}} \left(
 \e^{-{2t\eps^{-\lambda}}} + \frac1 { t^{\beta(2s+1)}}\right ) \leq C\frac 1 {\eps^{2s+1}}(1+t)^{(2s+1)\left({1 /\lambda} -\beta\right )}
\]
and  denoting $C_2 = C$
\[
 \left\| \tilde P_{\lambda, \eps, reg}( t) \right\|_{\dot H^s}\leq  C_2 \frac 1 {\eps^{s+{1 /2}}}(1+t)^{(s+{1 /2})\left({1 /\lambda} -\beta\right )}.
\]

\section{Conclusions}
In this paper we studied an approximation to fractional diffusion equations obtained
by using the argument originally proposed for the linear diffusion equation by Rosenau
 \cite{Ros}. As it happens for the linear diffusion, the approximation coincides with
 a  linear kinetic equation of Boltzmann type, in which the Maxwellian background is
 now
represented by a Linnik distribution \cite{L, L2}. A detailed analysis of the solution
to this kinetic equation allows us to obtain various interesting properties. Among
others, it was interesting to discover that the solution to the Rosenau approximation
can be split into two parts, easily identified in terms of their regularity: one
singular, and the other regular. The former simply represents a perturbation of mass
zero, and it decays to zero exponentially both with respect to time and to the small
parameter $\ve$ characterizing the approximation. The latter is shown to approach in
time, for any fixed value of the parameter $\ve$, the fundamental solution to the
fractional diffusion equation in strong sense. This allows us to conclude that the
Rosenau argument introduces in a natural way a consistent approximation of fractional
diffusion equations, which not only reproduces the limit phenomenon at fixed time and
for small values of the parameter $\ve$, but also reproduces, apart of rapidly decaying
perturbations, the limit phenomenon for large times. Out of doubts, these results
could be fruitfully employed to construct new numerical approximations to  fractional
diffusion equations.

\vskip 2cm \noindent {\bf Acknowledgements.} This work has been written within the
activities of the National Group of Mathematical Physics (GNFM) and of the National Group of Mathematical Ana\-ly\-sis, Probability and Applications
(GNAMPA) of INDAM. Two authors
(AP) and (GT) acknowledge support  by MIUR project ``Optimal mass transportation,
geometrical and functional inequalities with applications''. We thank Federico
Bassetti for useful discussions about the domain of normal attraction of a stable law.
%%%%%%%%%%%%%%%%%%%%%%%%%%%%%%%%%%%%%%%%%%%%%%

%%%%%%%%%%%%%%%%%%%%%%%%%%%%%%%%%%%
%%%%%%%%%%%%%%%%%%%%%%%%%%%%%%%%%%%   caso generale
%%%%%%%%%%%%%%%%%%%%%%%%%%%%%%%%%%%

%%%%%%%%%%%%%%%%%%%%%%%%%%%%%%%%%%%%%%%%%%%%%%%%%%%%%%%%%%%%%%%%%%%%%%%%%%%%%%%%%%%%%%%%%%%%%%%%%%%%%%%%%%%%%%%%%%%%%%%%%%%%%%%%%%%
%
%\bibliographystyle{amsalpha}
%\bibliography{biblio-boltz}
%\end{document}
%

%%%%%%%%%%%%%%%%%%%%%%%

\end{document}